\documentclass[10pt,fleqn]{article}

\usepackage{amsmath, amssymb, amsfonts, appendix}
\usepackage{amsthm}
\usepackage{extarrows}

\usepackage{enumerate}

\usepackage{graphicx}
\usepackage{eepic,color,colordvi,amscd}
\usepackage{subcaption}

\newtheorem{theorem}{Theorem}[section]
\newtheorem{lemma}[theorem]{Lemma}
\newtheorem{definition}[theorem]{Definition}

\newtheorem{problem}[theorem]{Problem}

\newtheorem{remark}[theorem]{Remark}
\newtheorem{corollary}[theorem]{Corollary}

\newcommand{\R}{\mathbb{R}}
\newcommand{\Z}{\mathbb{Z}}

\newcommand{\rs}{\setminus}

\newcommand{\mc}[1]{\mathcal{#1}}
\newcommand{\imp}{\Rightarrow}
\newcommand{\argmin}{\operatorname{argmin}}

\usepackage{graphicx}
\usepackage{tikz} 
\usepackage{hyperref}
\hypersetup{
	colorlinks=true,
	linkcolor=blue,
	filecolor=magenta,      
	urlcolor=cyan,
	pdftitle={},
}

\title{Eulerian-spanning set and coboundary operator: An investigation of maxcut beyond planar graphs}
\author{Qiming Fang\footnotemark[1] \and Sihong Shao\footnotemark[3] \and Yuxuan Wu\footnotemark[3]}
\date{\today}

\footnotetext[1]{Beijing International Center for Mathematical Research, Peking University, Beijing 100871, China. Email: \texttt{fangqiming@bicmr.pku.edu.cn}}
\footnotetext[2]{CAPT, LMAM and School of Mathematical Sciences, Peking University, Beijing 100871, China. Email: \texttt{sihong@math.pku.edu.cn}}
\footnotetext[3]{School of Mathematical Sciences, Peking University, Beijing 100871, China. Email: \texttt{snrhzn@stu.pku.edu.cn}}

\begin{document}
	
	\maketitle
	{\small {\textbf{Abstract}:} Using the concepts of Eulerian-spanning set and coboundary operator, we generalize Hadlock's conversion of the maxcut problem on planar graphs to one on general graphs with non-negative weights. Using our conversion, we can explore algorithms for maxcut beyond the class of planar graphs. We obtain a Fixed-Parameter Tractable algorithm for $k$-contraction apex graphs. Specifically, our algorithm can be applied to graphs with crossing number $k$, giving an $O(2^k(n+k)^{3/2}\log (n+k))$-time algorithm that matches the best known results when restricted to non-negative weights.
		
		{\textbf{Keywords:} maxcut, planar graphs, Eulerian-spanning set, coboundary operator, Fixed-Parameter Tractable algorithm.} 
	}

	\section{Introduction}\label{s1}
	The famous maxcut problem asks for a partition of vertices of a given graph into two sets such that the number of edges that cross between two parts attains maximum. It is an NP-complete problem \cite{80}, thus being unlikely to be solved efficiently in general. However, a major breakthrough by Hadlock \cite{79} showed that the maximum cut can be found in polynomial time on planar graphs. He showed that finding the maximum cut in a planar graph is equivalent to finding a minimum odd-circuit cover, which is further equivalent to finding a minimum odd-vertex pairing in the dual graph. While the problem of minimum odd-vertex pairing can be converted to a minimum matching problem, which can be solved using Edmonds' Blossom Algorithm \cite{104}, he concluded that we can find the maximum cut in a planar graph within polynomial time. \par
	In this paper, we start with Hadlock's conversion, and generalize it beyond the setting of planar graphs. The second step of Hadlock's conversion relies heavily on planar duality, whose generalization to general graphs remains to be established. With regard to this, we take a step back and see what happens. We noticed that on planar graphs, an odd-vertex pairing in the dual graph corresponds to an edge set in the original graph satisfying the following coboundary condition: its coboundary (over the field $\Z_2$) is the set of odd faces. Therefore, a minimum odd-circuit cover is simply a minimum edge set satisfying the coboundary condition. But this coboundary condition can be generalized to the setting of general graphs by replacing the concept of face with our concept of \emph{Eulerian-spanning set}. This idea rises from the fact that the boundaries of faces in a planar graph span its cycle space, i.e. the space of Eulerian subgraphs. Therefore we generalize Hadlock's conversion on planar graphs to one that applies to maxcut on general graphs. Moreover, when restricted to planar graphs, our conversion is almost identical to Hadlock's (c.f. Section \ref{s0}). Figure \ref{diagram} illustrates our generalization, and conversion of the maxcut problem.
	\paragraph{\bf Our contribution.} By noticing the coboundary condition on planar graphs, and generalizing it with the concept of Eulerian-spanning set, we obtain a conversion of maxcut on general graphs with non-negative weights. Using our conversion, the maxcut problem is eventually converted to a \emph{weighted set symmetric difference problem (WSSD)}, which is polynomial-time solvable under specific conditions. We obtain a fixed-parameter algorithm that solves maxcut on graphs with a Eulerian-spanning set satisfying a \emph{$k$-frequent appearance condition}, which is equivalent to the graph being a $k$-contraction apex graph (the concept of $k$-contraction apex graphs is a generalization of the concept of contraction apex graphs in \cite{115}, c.f. Definition \ref{def11}). This result can be modified to cover the case of graphs with crossing number $k$.\par
	\begin{figure}[h]
		\centering
		\includegraphics[scale=.6]{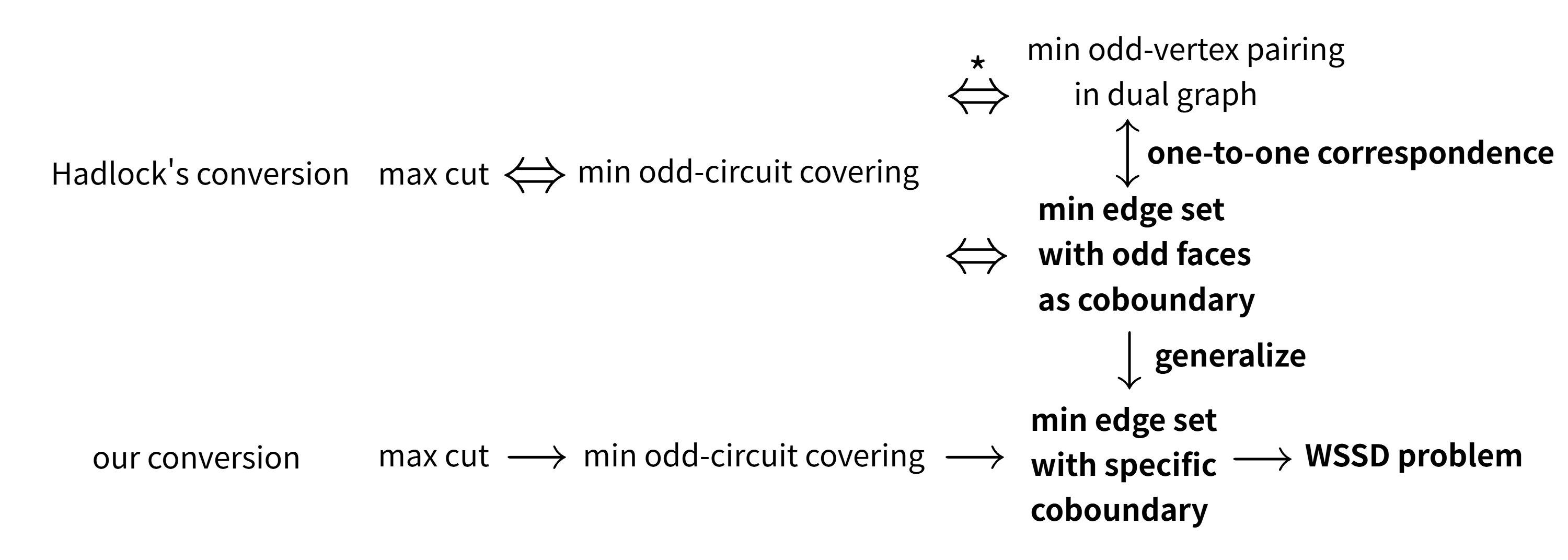}
		\caption{A comparison between Hadlock's conversion \cite{79} and ours. The star mark means that this step relies on planar duality, which is yet to be generalized to general graphs. We use bold font to emphasize what we notice, our generalization and our conversion.}
		\label{diagram}
	\end{figure}
	\vspace{10pt} \par
	\begin{definition} \label{def11}
		An undirected graph $G=(V,E)$ is a $k$-contraction apex graph, if we can contract at most $k$ edges in $G$ and get a planar graph.
	\end{definition}
	Throughout this paper, we will use $P_k(G)$ to denote an edge set of cardinality at most $k$ such that after contracting $P_k(G)$, $G$ becomes a planar graph. \par
	In this work, we give an algorithm for maxcut parameterized by the number $k$, which comes down to the following main theorem. 
	\begin{theorem}\label{thm11}
		Suppose $G=(V,E)$ is an undirected, simple, $k$-contraction apex graph, $w:E\to \R^+\cup\{0\}$ is a weight function, and $P_k(G)$ is included in the input. Let $n=|V|,m=|E|$ Then there is an $O(2^k\max\{m,n\}^{3/2}\log\max\{m,n\})$-time algorithm finding the maximum weighted cut of $G$. \par
	\end{theorem}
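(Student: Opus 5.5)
We plan to reduce the maximum weighted cut to a minimum-weight "odd-cycle cover" problem and then solve that through a coboundary condition. The coboundary condition will be posed over an Eulerian-spanning set obtained from a planar embedding of the contracted graph $G/P_k(G)$.

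\textbf{Step 1: reformulate maxcut.} Since $w\ge 0$, a maximum cut is the complement of a minimum-weight edge set $F$ such that $G\setminus F$ is bipartite. Equivalently, $F$ meets every cycle of $G$ of odd length in an odd number of edges. Over $\Z_2$, this says $F$ (viewed as a $1$-cochain) agrees with the parity functional $C\mapsto |C|\bmod 2$ on the whole cycle space. Because this parity map is linear, it suffices to impose the condition on a spanning set of the cycle space. This is exactly an Eulerian-spanning set $\mathcal{S}$: we require $|F\cap S|\equiv |S| \pmod 2$ for all $S\in\mathcal{S}$. In coboundary language, we seek a minimum-weight $F$ whose coboundary $\delta F$ (as a set of elements of $\mathcal{S}$) equals the set of odd members of $\mathcal{S}$.

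\textbf{Step 2: build $\mathcal{S}$ and reduce to WSSD.} Contract $P_k(G)$ and embed the resulting planar graph $G'$ in $O(n)$ time. The face boundaries of $G'$ span its cycle space. We lift each face boundary to an Eulerian subgraph of $G$ by adding suitable edges of $P_k(G)$ along the contracted trees; each face has a unique such lift, since the contracted edges form a forest that can be assumed acyclic. We also add at most $k$ fundamental cycles through the edges of $P_k(G)$. Together these span the cycle space of $G$.

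In the resulting family, every edge outside $P_k(G)$ lies in at most two members, as in a planar dual. The $k$ edges of $P_k(G)$ may appear in many members; this is the "$k$-frequent appearance condition". We then guess $F\cap P_k(G)$ over all $2^k$ subsets. For each guess, the parity requirements on $\mathcal{S}$ are updated. The remaining edges each touch at most two sets, so the problem becomes a $T$-join problem in the planar dual-type graph whose vertices are the members of $\mathcal{S}$ and whose edges are the non-$P_k(G)$ edges of $G$. Here $T$ is the set of members with the wrong parity.

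\textbf{Step 3: solve each instance.} For each of the $2^k$ guesses, we solve a minimum $T$-join. Edges incident to a single set become loops or pendant constraints and can be handled directly. The dual-type graph is essentially the planar dual of $G'$, augmented by the extra fundamental-cycle vertices. We use the planar minimum-weight $T$-join and perfect-matching techniques to obtain $O(N^{3/2}\log N)$ time per instance, where $N=O(\max\{m,n\})$. This gives the total bound of $O(2^k\max\{m,n\}^{3/2}\log\max\{m,n\})$.

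\textbf{Main obstacle.} The hard part is making the per-guess instance genuinely planar, or close enough to planar, to apply the $O(N^{3/2}\log N)$ planar $T$-join routine. The lifted faces and the added fundamental cycles may break the dual structure. Our first attempt would be to avoid extra vertices altogether: treat the $P_k(G)$-edges as the only high-frequency elements and show that the face lifts alone already span the cycle space of $G$, up to the guessed edges. This would make the residual instance exactly the planar dual of $G'$ with modified parities.
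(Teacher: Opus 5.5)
Your architecture matches the paper's. It uses a coboundary condition over lifted face boundaries of the contracted planar graph plus fundamental cycles, at most $2^k$ branches on the high-frequency edges, and a planar $T$-join per branch. Your Step~1, ``$E\setminus F$ is orthogonal to $C(G)$, i.e.\ is a cut,'' is actually cleaner than the paper's detour through Hadlock's minimal odd-circuit covers. Note, though, that your intermediate ``equivalently'' with bipartizing sets holds only at the level of minima for $w\ge 0$.

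\textbf{The gap.} It sits in Step~2, exactly where you stop at the ``main obstacle.'' Let $E_T\subseteq P_k(G)$ be a spanning forest of $H=(V,P_k(G))$. Your family accounts for the non-forest edges of $P_k(G)$. It does not account for edges $e\notin P_k(G)$ whose endpoints lie in one component of $H$; these become self-loops after contraction. Suppose loops are discarded before embedding, as is natural and as the paper does for $G_p$. A dimension count then shows that $C(G)$ is the direct sum of the span of the lifted faces and a complement. The complement's dimension is the number of edges outside $E_T$ with both ends in one component of $H$. This can be $\binom{k}{2}$: take $G=K_{k+1}$ with $P_k(G)$ a spanning star. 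So lifted faces plus ``at most $k$'' extra cycles fail to span $C(G)$ for $k\ge 4$. The parity constraints then no longer force $E\setminus F$ to be a cut, and the algorithm can return a non-cut.

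\textbf{How the paper repairs it.} The same repair also removes your obstacle.
\begin{itemize}
\item For \emph{every} edge $e\notin E_T$ with both ends in one component of $H$, include its fundamental cycle $C_e$ with respect to $E_T$. This family is $\mathcal{C}_1^*$, and these edges form $E_c^*$.
\item Include the lifts of the face boundaries of the loopless graph $G_p$. This family is $\mathcal{C}_2^*$.
\end{itemize}
Then $E=E_T\sqcup E_c^*\sqcup E_p^*$. Each $e\in E_c^*$ lies in exactly one member, namely $C_e$, and in no lifted face. Each crossing edge $e\in E_p^*$ lies only in lifted faces, at most two of them. So after guessing $F\cap E_T$, the parity of $C_e$ forces whether $e\in F$, independently of everything else. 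What remains is precisely a $T$-join on the planar dual of $G_p$, with no extra vertices and nothing further to make planar.

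\textbf{An alternative.} Your ``first attempt'' does succeed if you keep all loops and parallel edges of $G/E_T$ in the embedding. For a forest $E_T$, the map $Z\mapsto Z\setminus E_T$ is an isomorphism $C(G)\to C(G/E_T)$: it is injective, and the dimensions agree. So the lifted faces alone span $C(G)$, and the residual instance is the dual of $G/E_T$.

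One of these arguments must be supplied. As written, your proposal asserts spanning for a family that need not span, and it leaves per-branch planarity unproved.
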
 \par
	It is easy to see that planar graphs are simply the case of $k=0$ in Theorem \ref{thm11}. Moreover, in Section \ref{s7}, we will show that an immediate corollary (Corollary \ref{col62}) of this theorem covers the case of graphs with crossing number $k$, giving an $O(2^k(n+k)^{3/2}\log(n+k))$-time algorithm (c.f. Theorem \ref{thm72}), which matches the results of \cite{116,117} restricted to the setting of non-negative weights. We note that our algorithm is essentially different from those in \cite{116,117}, which directly branch the original maxcut problem to multiple maxcut problems.
	\paragraph{\bf Related work.} Polynomial-time algorithms for max-cut have also been found in other classes of graphs, such as graphs not contractible to $K_5$ \cite{92}, graphs without long odd cycles \cite{94}, graphs of bounded treewidth \cite{106}, line graphs \cite{107}, and weakly bipartite graphs \cite{91}.
	\paragraph{\bf Organization of this paper. } In Section \ref{s0}, we illustrate the application of our conversion to planar graphs as a rough sketch of the main ideas. In Section \ref{s2} we give the formal definition of Eulerian-spanning set and coboundary operators. In Section \ref{s3} we describe the detailed conversion from the weighted maximum cut problem to a WSSD problem. In Section \ref{s5} we describe how to solve WSSD under the condition $|S_i|\leq 2,\forall S\in\mc{S}$. In Section \ref{s6} we prove Theorem \ref{thm11} by finding a Eulerian-spanning set for $k$-contraction apex graphs, and utilizing the planarity. In Section \ref{s7} we prove that the immediate corollary of Theorem \ref{thm11} (c.f. Corollary \ref{col62}) covers the case of graphs with crossing number $k$. \par
	\paragraph{\bf Basic assumptions. }All graphs considered in this paper are finite and undirected, but are allowed to have multiple edges and self-loops unless otherwise specified. When we talk about a planar graph, we will always assume that its planar embedding is arbitrarily fixed unless otherwise specified. Throughout this paper, copies of the same instance in a multiset are considered to be different.

	\section{Our conversion on planar graphs} \label{s0}
	In this section we briefly illustrate our conversion on planar graphs. The purpose is to prepare the reader for understanding our conversion on general graphs. We consider unweighted maximum cut on simple planar graphs for simplicity. We set up some definitions and notations first. 
	\begin{definition} \label{def21}
		Suppose $S$ is a finite multiset. Then $\Z_2^{S}$ is the linear space over $\Z_2$ (the field with two elements) with dimension $|S|$. Moreover, every element in $\Z_2^S$ is regarded as an indicator function of some subset of $S$.
	\end{definition} \par
	\begin{definition}
		Suppose $S_1,S_2$ are two subsets of the same multiset $S$. Then $S_1\triangle S_2$ is the symmetric difference between $S_1,S_2$.
	\end{definition}\par
	From now on, for a subset $S_1\subseteq S$, the term $S_1$ denotes both the subset, and the corresponding indicator vector in $\Z_2^S$. Therefore, $\forall S_1\subseteq S,S_2\subseteq S$, $S_1+S_2$ has the same meaning as $S_1\triangle S_2$. We use ``+" when we want to emphasize the addition with respect to $\Z_2$, and use ``$\triangle$" otherwise.\par
	\begin{definition}
		Suppose $G=(V,E)$ is a graph. Then $\partial_1:\Z_2^{E}\to \Z_2^V$ is the unique linear operator such that $\forall e\in E$, if $e=(v_1,v_2)$, then $\partial_1(\{e\})=\{v_1\}+\{v_2\}$. \par
	\end{definition}
	\begin{definition}
		Suppose $G=(V,E)$ is a planar graph. We define the boundary of a face $f$ to be the set $\{e\in E:\text{only one of the two faces on $e$'s two sides is }f\}$. We say that $f$ is odd, if the boundary of $f$ consists of an odd number of edges, and even otherwise.
	\end{definition}
	\begin{definition}[\it{coboundary operator for planar graphs}]
		Suppose $G=(V,E)$ is a planar graph. Let $F$ be the set of faces of $G$. Then $\delta_p:\Z_2^{E}\to \Z_2^{F}$ is the unique linear operator such that $\forall e\in E$, if $f_1,f_2$ are the two faces on two sides of $e$, then $\delta_p(\{e\})=\{f_1\}+\{f_2\}$.\par
	\end{definition} \par
	It is easy to see that the image of a pendant edge under $\delta_p$ is zero, because two faces on two sides of a pendant edge are the same and therefore cancel out. Now we begin converting the maxcut problem on planar graphs. First, Hadlock showed that finding the maximum cut in a planar graph is equivalent to finding a minimum odd-circuit cover:
	\begin{theorem}[\it{\cite{79},Corollary 1}]\label{thm01}
		Suppose $G=(V,E)$ is a simple graph. An edge set $E_1\subseteq E$ is a maximum cut if and only if $E\rs E_1$ is a minimum odd-circuit cover.
	\end{theorem}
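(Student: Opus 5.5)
The plan is to prove both directions at once by showing that complements of cuts are exactly the minimal odd-circuit covers (up to containment), and then comparing cardinalities. Recall that an odd-circuit cover is an edge set $C\subseteq E$ meeting every odd circuit of $G$.

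\textbf{Step 1: complements of cuts are covers.} Let $E_1=\delta(U)$ be a cut for some $U\subseteq V$. Every circuit crosses between $U$ and $V\setminus U$ an even number of times, so it uses an even number of edges of $E_1$. An odd circuit therefore cannot lie entirely inside $E_1$, and so it contains an edge of $E\setminus E_1$. Hence $E\setminus E_1$ is an odd-circuit cover.

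\textbf{Step 2: covers contain the complement of some cut.} Let $C$ be an odd-circuit cover. Then the graph $(V,E\setminus C)$ has no odd circuit, so it is bipartite. Take a bipartition $U, V\setminus U$ of it. Every edge of $E\setminus C$ joins $U$ to $V\setminus U$, so $E\setminus C\subseteq\delta(U)$. Equivalently, $E\setminus\delta(U)\subseteq C$, which gives
\[
|E\setminus\delta(U)|\le |C|.
\]

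\textbf{Step 3: conclude.} Let $\mu$ be the minimum size of an odd-circuit cover. By Steps 1 and 2, $\mu=\min_U|E\setminus\delta(U)|=|E|-\max_U|\delta(U)|$.
\begin{itemize}
\item If $E_1$ is a maximum cut, then $|E\setminus E_1|=|E|-\max_U|\delta(U)|=\mu$. By Step 1, $E\setminus E_1$ is a cover, so it is a minimum one.
\item Conversely, suppose $E\setminus E_1$ is a minimum cover. Step 2 gives a cut $\delta(U)$ with $E\setminus\delta(U)\subseteq E\setminus E_1$. By Step 1 the left side is itself a cover, so minimality forces equality of sizes. Since one set contains the other, the two sets are equal, so $E_1=\delta(U)$ is a cut. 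Its size is $|E|-\mu$, which is maximum.
\end{itemize}

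The main obstacle is this last subtlety: a minimum cover must be shown to be exactly the complement of a cut, not merely to contain one. The containment-plus-equal-cardinality argument in Step 3 handles it. Simplicity of $G$ is not essential to the argument. Self-loops would be odd circuits that lie in every cover, and the same reasoning still applies.
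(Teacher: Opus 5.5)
Your proof is correct and takes essentially the standard route. The paper does not reprove this result but cites Hadlock; your Steps 1 and 2 together amount to Hadlock's Theorem~1 (the paper's Theorem~\ref{thm30}), Step 3 is the cardinality comparison giving his Corollary~1, and the same bipartition-and-compare reasoning appears in the paper's proof of Theorem~\ref{thm31}. One further remark: your containment-plus-equal-size step is exactly what needs unit (positive) weights, since with zero weights a minimum cover can contain extra edges, which is why Theorem~\ref{thm31} only extends the complement to a cut rather than asserting it is one.
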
 \par
	Now we show that a minimum odd-circuit cover is also a minimum cardinality set whose image under $\delta_p$ is the set of odd faces.
	\begin{theorem}\label{thm02}
		Suppose $G=(V,E)$ is a simple planar graph, and $F$ is the set of faces of $G$. Let $F_o$ be the set of odd faces in $G$. We have (1)$\imp$(2)$\imp$(3):\par
		(1) $E_1$ is a minimum odd-circuit cover of $G$. \par
		(2) $\delta_p(E_1) = F_o$. Or equivalently, $\forall f\in F$, if $C_f$ is the boundary of $f$, then $|C_f\cap E_1|\equiv|C_f|\pmod{2}$. \par
		(3) $E_1$ is an odd-circuit cover of $G$.
	\end{theorem}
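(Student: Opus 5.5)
The plan is to reformulate everything in terms of the complement $E\rs E_1$ and use the parity of its intersection with Eulerian edge sets. I will first record two facts and then prove the two implications in order.

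\textbf{Two preliminary facts.} The first is that, for every face $f$, the $f$-coordinate of $\delta_p(E_1)$ equals $|C_f\cap E_1| \bmod 2$. Indeed, an edge $e$ contributes $\{f_1\}+\{f_2\}$. This is nonzero in coordinate $f$ exactly when exactly one of $f_1,f_2$ is $f$, i.e.\ when $e\in C_f$; edges with $f$ on both sides cancel. Hence $\delta_p(E_1)=F_o$ is equivalent to $|C_f\cap E_1|\equiv|C_f|\pmod 2$ for all $f$, which justifies the ``equivalently'' in (2). Since $|C_f\cap E_1|+|C_f| \equiv |C_f\cap (E\rs E_1)| \pmod 2$, condition (2) says precisely that every face boundary $C_f$ meets $E\rs E_1$ in an even number of edges.

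The second fact is that each $C_f$ is an Eulerian edge set, i.e.\ every vertex has even degree in $C_f$. Around a vertex $v$, the incident edges and faces alternate cyclically in the embedding. An edge at $v$ lies in $C_f$ iff crossing it switches between being in $f$ and not being in $f$. Going once around $v$ returns to the starting face, so the number of such switches is even. (A loop at $v$ is counted twice in this rotation, consistent with its contribution to the degree.)

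\textbf{(1)$\imp$(2).} Let $E_1$ be a minimum odd-circuit cover, so $G-E_1$ has no odd cycles and is bipartite, with some bipartition $(A,B)$ of $V$. If some $e\in E_1$ joined $A$ to $B$, then $G-(E_1\rs\{e\})$ would still be bipartite with the same bipartition. Thus $E_1\rs\{e\}$ would be a smaller odd-circuit cover, a contradiction. So every edge of $E_1$ lies inside $A$ or inside $B$, and hence $E\rs E_1$ consists exactly of the edges between $A$ and $B$, i.e.\ it is the cut of $A$. A cut meets every Eulerian edge set in an even number of edges: summing the degrees in $C_f$ over $v\in A$ gives an even number, and this sum equals $2|C_f\cap E(A)|+|C_f\cap(E\rs E_1)|$. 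Applying this to $C_f$ and using the first fact yields (2).

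\textbf{(2)$\imp$(3).} The map $C\mapsto |C\cap(E\rs E_1)| \bmod 2$ is linear on $\Z_2^E$. By (2) it vanishes on every $C_f$, so it vanishes on their span. The key input is the standard fact that the face boundaries of a plane graph span its cycle space, the space of Eulerian subgraphs. I would justify this by the usual argument: an Eulerian set $C$ is a disjoint union of cycles, and each cycle $Z$ is, by the Jordan curve theorem, the sum of the boundaries of the faces lying inside $Z$. Each edge strictly inside is counted twice, and each edge of $Z$ once. So for every odd cycle $Z$, the intersection $|Z\cap(E\rs E_1)|$ is even, hence $|Z\cap E_1|$ is odd and in particular $\geq1$. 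Therefore $E_1$ meets every odd cycle, i.e.\ it is an odd-circuit cover.

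The main obstacle I anticipate is the spanning step, specifically handling the Jordan-curve decomposition carefully when $G$ is disconnected, where a face may touch several components. Here I would note that the argument only uses faces inside a single cycle $Z$, and summing their boundaries still gives exactly $Z$.
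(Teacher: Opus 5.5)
Your proof is correct. Your (2)$\imp$(3) is essentially the paper's argument: a cycle is the sum of the boundaries of the faces it encloses, so the parity condition passes from faces to cycles. You phrase it as the vanishing of the linear functional $C\mapsto|C\cap(E\rs E_1)| \bmod 2$.

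Your (1)$\imp$(2) takes a genuinely different route. The paper stays inside the cycle space. For each $e_i\in C_f\cap E_1$ it uses minimality to pick an odd circuit $C_i$ meeting $E_1$ only in $e_i$. It then forms $C_{tot}=C_f+\sum_i C_i$, which avoids $E_1$, decomposes it into circuits to conclude it is even, and reads off $|C_f|\equiv t \pmod 2$. You instead use minimality to show that $E\rs E_1$ is exactly the cut of a bipartition $(A,B)$ of $G-E_1$, which is essentially Hadlock's Theorem~\ref{thm30}. You then apply the degree-sum identity, i.e.\ the fact that cuts meet Eulerian sets evenly.

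What each buys:
\begin{itemize}
\item Your route is more structural and avoids decomposing $C_{tot}$ into circuits.
\item Your route also proves explicitly that each $C_f$ is Eulerian. The paper uses this only tacitly when it asserts $C_{tot}\in C(G)$.
\item The paper's route needs no bipartition and matches the cycle-space language of the rest of the conversion.
\item Both arguments work verbatim with any Eulerian set in place of $C_f$ and use only minimality under inclusion. So either one gives the (1)$\imp$(2) part of the general Theorem~\ref{thm32}.
\end{itemize}

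Your remark on disconnected graphs is also sound. Each face lies entirely inside or entirely outside $Z$, so the enclosed faces' boundaries sum to $Z$ regardless of connectivity.
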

	\begin{proof}
		(1)$\imp$(2): Suppose $C_f\cap E_1=
		\{e_1,e_2,\cdots,e_t\}$, where $t=0$ is allowed. Because $E_1$ is minimal, $\forall i\in\{1,2,\cdots,t\},E_1\rs\{e_i\}$ is not an odd-circuit cover. Therefore, there is an odd circuit $C_i$ that intersects $E_1$ only in $e_i$. Let us consider $C_{tot}=C_f+\sum_{i=1}^tC_i$. Then $C_{tot}\in C(G)$. Therefore $C_{tot}$ is the disjoint union of some circuits (it can happen that $C_{tot}=\emptyset$). Moreover, $C_{tot}\cap E_1=\emptyset$. Therefore, $|C_{tot}|$ must be even, or there is an odd circuit $C_o'\subseteq C_{tot}$ that does not intersect $E_1$, a contradiction. But $\forall i\in\{1,2,\cdots,t\},|C_i|$ is odd. We then have that the parity of $|C_f|$ must be the same as $t$, which is precisely the conclusion we want. \par
		(2)$\imp$(3): We only need to prove that every cycle in $G$ that does not intersect $E_1$ is even, because circuits can be decomposed to disjoint cycles. Suppose $C$ is a cycle in $G$ that does not intersect $E_1$. Then $C$ enclose a region in the plane. Let those faces in that region be $f_1,\cdots,f_t$, and their boundaries be $C_1,\cdots,C_t$. Then we have $C=\sum_{i=1}^tC_i$, where $\forall i,C_i\in\mc{C}$. Therefore $C\cap E_1=\sum_{i=1}^t(C_i\cap E_1)$. Then $|C|\equiv\sum_{i=1}^t|C_i|\pmod{2},|C\cap E_1|\equiv\sum_{i=1}^t|(C_i\cap E_1)|\pmod{2}$. Since $|C\cap E_1| = 0$, we have $\sum_{i=1}^t|(C_i\cap E_1)|\equiv 0\pmod{2}$. But $|C_i\cap E_1|\equiv|C_i|\pmod{2}$ by (2). Therefore $|C|\equiv\sum_{i=1}^t|C_i|\equiv\sum_{i=1}^t|(C_i\cap E_1)|\equiv 0\pmod{2}$.
	\end{proof} \par

	\begin{corollary}\label{col01}
		Suppose $G=(V,E)$ is a simple planar graph. Let $F_o$ be the set of odd faces in $G$. Then an edge set $E_1\subseteq E$ is a minimum odd-circuit cover if and only if it is a minimum cardinality set such that $\delta_p(E_1)=F_o$ .
	\end{corollary}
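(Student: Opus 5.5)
The corollary follows from Theorem \ref{thm02} by comparing minimum values. Let $\mathcal{A}$ be the family of odd-circuit covers of $G$, and $\mathcal{B}$ the family of edge sets $E_1$ with $\delta_p(E_1)=F_o$. Part (2)$\imp$(3) of Theorem \ref{thm02} gives $\mathcal{B}\subseteq\mathcal{A}$. Part (1)$\imp$(2) says every minimum odd-circuit cover lies in $\mathcal{B}$. Here I read ``minimum'' as minimum cardinality; the proof of (1)$\imp$(2) only uses that no single edge can be removed, so it applies a fortiori.

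\textbf{Step 1: the two minima agree.} Both families are nonempty: $E\in\mathcal{A}$, hence some minimum odd-circuit cover exists, and it lies in $\mathcal{B}$. Let $a=\min_{A\in\mathcal{A}}|A|$ and $b=\min_{B\in\mathcal{B}}|B|$. Since $\mathcal{B}\subseteq\mathcal{A}$, we get $a\le b$. A minimum odd-circuit cover has size $a$ and lies in $\mathcal{B}$, so $b\le a$. Hence $a=b$.

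\textbf{Step 2: the forward direction.} Let $E_1$ be a minimum odd-circuit cover. By (1)$\imp$(2), $E_1\in\mathcal{B}$, and $|E_1|=a=b$. So $E_1$ is a minimum cardinality set with $\delta_p(E_1)=F_o$.

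\textbf{Step 3: the converse.} Let $E_1$ be a minimum cardinality set with $\delta_p(E_1)=F_o$. By (2)$\imp$(3), $E_1\in\mathcal{A}$, and $|E_1|=b=a$. So $E_1$ is a minimum odd-circuit cover.

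There is no real obstacle here. The only point needing care is Step 1: we must know that a minimum odd-circuit cover exists in order to conclude $\mathcal{B}\neq\emptyset$ and $b\le a$. This holds because the family of covers is finite and contains $E$.
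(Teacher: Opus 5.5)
Your argument is correct and is essentially the paper's own route. The paper gives no explicit proof of this corollary and treats it as immediate from Theorem \ref{thm02}; your comparison of the two minima via (1)$\Rightarrow$(2) and (2)$\Rightarrow$(3), including the existence check and the observation that (1)$\Rightarrow$(2) only needs single-edge irremovability, is exactly the routine deduction being left to the reader.
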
 \par
	Therefore, we need only to find a minimum cardinality set $E_1$ such that $\delta_p(E_1)=F_o$. Let the set of faces of $G$ be $F$. Then $\forall e\in E, \delta_p(\{e\})\subseteq F$. Moreover, $\forall E_s\subseteq E,\delta_{p}(E_s)=\triangle_{e\in E_s}\delta_p(e)$. Therefore, the problem can be further converted to the following \emph{set symmetric difference problem (SSD):}
	\begin{problem}[\it{SSD}]\label{prb01}
		Suppose $S$ is a set, and $\mathcal{S}=\{S_i\}_{i=1}^{m}\subseteq\Z_2^S$ is a multiset consisting of subsets of $S$. Suppose we are given a target set $T\subseteq S$. The goal is to find a minimum cardinality multiset $\mathcal{T}\subseteq\mc{S}$, such that $\triangle_{S_i\in\mc{T}}S_i=T$, or claim that no such $\mc{T}$ exists.
	\end{problem} \par
	If we let $S=F,\mc{S}=\{\delta_p(\{e\}):e\in E\},T=F_o$, then this instance of SSD is equivalent to finding a minimum cardinality set $E_1$ such that $\delta_p(E_1)=F_o$. Obviously the conversion can be done in polynomial time. Because $G$ is planar, we have $|\delta_p(\{e\}|\in\{0,2\},\forall e\in E$. But sets of size 0 can simply be discarded. Therefore we only need to solve SSD in polynomial time when $|S_i|=2,\forall S_i\in\mc{S}$. But if we view elements in $S$ as vertices, subsets $S_i\in\mc{S}$ as edges, we get the following vertex pairing problem:
	\begin{problem}[\it{vertex pairing problem}]
		Given a graph $G=(V,E)$, and a vertex set $V_t\subseteq V$, find the minimum cardinality edge set $F\subseteq E$ such that $\partial_1(F)=V_t$, or claim that no such $F$ exists.
	\end{problem}
	This vertex pairing problem can be easily solved in polynomial time by Dijkstra's shortest path algorithm and weighted perfect matching algorithm. For a complete proof, see Lemma \ref{lem51}. We note that the vertex pairing problem we get eventually is the same as the odd-vertex pairing problem as in Hadlock's work \cite{79}. \par
	Therefore, we have illustrated our conversion on planar graphs. The conversion on general graphs is similar, only that the faces are replaced by our concept of Eulerian-spanning set. The rest of the paper is devoted to proving our main results.
	\section{Eulerian-spanning set and coboundary operator}\label{s2}
	Now we describe our conversion on general graphs. We consider weighted maxcut on simple graphs, because multiple edges and self-loops can be handled easily. We set up some definitions, some of which are standard and are included only for completeness.
	\begin{definition}\label{def22}
		Given a graph $G=(V,E)$, a Eulerian subgraph of $G$ is a subgraph $G_1=(V,E_1),E_1\subseteq E$, such that every vertex $v\in V$ is of even degree in $G_1$. \par
		The cycle space of $G$, denoted by $C(G)$, is the set of the edge sets of its Eulerian subgraphs. $C(G)$ is a subspace of $\Z_2^E$.
	\end{definition} \par
	It can be easily verified that every element in $C(G)$ is the disjoint union of some cycles in $G$. \par
	From now on, we may use a subset $E_1\subseteq E$ to denote a subgraph of $G$ if the vertex set does not affect our deduction, or is clear from context.
	\begin{definition}\label{def23}
		Given a graph $G=(V,E)$, a Eulerian-spanning set of $G$ is a finite multiset $\mc{C}$ consisting of elements in $C(G)$, such that the span of elements in $\mc{C}$ is $C(G)$.
	\end{definition} \par
	By this definition, every Eulerian subgraph of $G$ is the symmetric difference between some elements in $\mc{C}$. In the setting of planar graphs, the boundary of all faces consists a Eulerian-spanning set. With regard to this, one may find that our subsequent conversion comes down to what we described in Section \ref{s0} when restricted to planar graphs.
	\begin{remark}
		We do not require elements in $\mc{C}$ to be linearly independent, because finding a maximum independent subset can be computationally costly (Gaussian elimination takes time $O(n^3)$, where $n$ is the order of the matrix). Removing the independence condition also makes it easier to prove some theorems. \par
	\end{remark}
	\begin{remark}
		We allow multiple edges and self-loops to appear in the above definition. This is because in Section \ref{s6}, we will construct Eulerian-spanning set on graphs possibly with multiple edges and self-loops.
	\end{remark}
	We then give the definition of $k$-frequent appearance condition:
	\begin{definition}\label{def24}
		Suppose $G=(V,E)$ is an undirected graph, and $\mc{C}$ is a Eulerian-spanning set of $G$. We say $\mc{C}$ satisfies the $k$-frequent appearance condition, if $|\{e\in E:|\{C_i\in\mc{C}:e\in C_i\}|\geq 3\}|\leq k$.
	\end{definition}
	\begin{definition}[\it{coboundary operators for general graphs with a Eulerian-spanning set}]\label{def25}
		Given a graph $G=(V,E)$ and a Eulerian-spanning set $\mc{C}$ of $G$, we make the following definitions:\par
		(i) $\partial_1:\Z_2^{E}\to \Z_2^V$ is the unique linear operator such that $\forall e\in E$, if $e=(v_1,v_2)$, then $\partial_1(\{e\})=\{v_1\}+\{v_2\}$. \par
		(ii) $\delta_1:\Z_2^{E}\to \Z_2^{\mc{C}}$ is the unique linear operator such that $\forall e\in E,\delta_1(\{e\})=\{C\in\mc{C}:e\in C\}$. \par
	\end{definition} \par
	By the above definition, it is obvious that
	$$\forall E_1\subseteq E,\delta_1(E_1)=\{C\in\mc{C}:|C\cap E_1|\equiv 1\pmod{2}\}$$ \par
	We additionally note that we do not need a Eulerian-spanning set to define the operator $\partial_1$.

	\section{Converting weighted maxcut to WSSD}\label{s3}
	The first step is similar to that in Section \ref{s0}.
	\begin{theorem}[\it{\cite{79},Theorem 1}]\label{thm30}
		Suppose $G=(V,E)$ is a simple graph. An edge set $E_1\subseteq E$ is a subset of a cut if and only if $E\rs E_1$ is an odd-circuit cover.
	\end{theorem}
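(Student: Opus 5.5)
We prove both directions directly from the definitions. A cut is $\delta(A)=\{(u,v)\in E: u\in A, v\notin A\}$ for some $A\subseteq V$. An odd-circuit cover is an edge set meeting every odd circuit. Because a circuit (an element of $C(G)$, i.e.\ a disjoint union of cycles) of odd size contains an odd cycle, it suffices to work with odd cycles throughout.

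($\imp$) Suppose $E_1\subseteq\delta(A)$ and let $C$ be any cycle of $G$. Walking around $C$, every edge of $\delta(A)$ switches sides between $A$ and $V\rs A$. Since the walk returns to its starting vertex, $|C\cap\delta(A)|$ is even. Hence an odd cycle $C$ cannot lie entirely inside $\delta(A)$, so it contains an edge of $E\rs\delta(A)\subseteq E\rs E_1$. Thus $E\rs E_1$ meets every odd cycle, and therefore every odd circuit.

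($\Leftarrow$) Suppose $E\rs E_1$ meets every odd circuit. Then the subgraph $H=(V,E_1)$ contains no odd cycle, so $H$ is bipartite. We 2-color each connected component of $H$ separately, which gives a partition $V=A\cup(V\rs A)$ in which every edge of $E_1$ joins the two classes. Hence $E_1\subseteq\delta(A)$, so $E_1$ is a subset of a cut.

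The only step that needs care is the fact that a graph with no odd cycle is bipartite, which the reverse direction relies on. We argue it briefly with BFS layers inside each component: if some edge joined two vertices in layers of the same parity, then together with the two tree paths back to their common ancestor it would close a closed walk of odd length, and every closed walk of odd length contains an odd cycle. The simplicity assumption on $G$ is not really needed here, because a self-loop is itself an odd cycle and is handled by the same argument. With the bipartiteness fact in place, both directions are immediate.
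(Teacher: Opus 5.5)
Your proof is correct. The paper does not prove this statement itself but cites Hadlock. Your argument is the standard one: every cycle meets a cut in an even number of edges for one direction, and a graph with no odd cycle is bipartite for the other. It is also exactly the reasoning the paper uses in the proof of Theorem \ref{thm31}, where it passes from ``$E\rs E_1$ is an odd-circuit cover'' to ``$(V,E_1)$ is bipartite'' and then 2-colors $(V,E_1)$.
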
 \par
	\begin{theorem}\label{thm31}
		Suppose $G=(V,E)$ is a simple graph, $w:E\to \R^+\cup\{0\}$ is a weight function. Let $n=|V|,m=|E|$. Suppose we are given a set $E_1\subseteq E$ such that $E\rs E_1$ is a minimum weighted odd-circuit cover, then we can find a maximum weighted cut in time $O(m+n)$.
	\end{theorem}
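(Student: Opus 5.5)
Given $E_1$ with $E\rs E_1$ a minimum weighted odd-circuit cover, the plan is to show first that $E_1$ is contained in some cut, then to extend it to a full cut in linear time, and finally to argue that this cut has maximum weight.

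\emph{Step 1: $E_1$ lies in a cut.} By Theorem \ref{thm30}, since $E\rs E_1$ is an odd-circuit cover, $E_1$ is a subset of some cut. Equivalently, the subgraph $(V,E_1)$ has no odd cycle, i.e.\ it is bipartite.

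\emph{Step 2: construct the cut.} Run BFS (or DFS) on each connected component of $(V,E_1)$ and 2-color it. Put the color-0 vertices in $A$ and the color-1 vertices in $V\rs A$. This takes $O(m+n)$ time. Then every edge of $E_1$ joins $A$ to $V\rs A$, so the cut $\delta(A)=\{(u,v)\in E: u\in A, v\notin A\}$ contains $E_1$. Computing $\delta(A)$ and its weight takes another $O(m)$ scan.

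\emph{Step 3: optimality.} Let $K$ be any cut. Every odd circuit meets $K$ in an even number of edges, so it must use an edge outside $K$. Hence $E\rs K$ is an odd-circuit cover, and minimality gives $w(E\rs K)\ge w(E\rs E_1)$, i.e.\ $w(K)\le w(E_1)$. On the other hand, $E_1\subseteq\delta(A)$ and $w\ge 0$ give $w(\delta(A))\ge w(E_1)$. Combining the two, $\delta(A)$ is a maximum weighted cut.

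The only real subtlety is Step 3, and it rests on non-negativity of the weights twice: once to pass from the subset $E_1$ to the superset $\delta(A)$, and once to apply Theorem \ref{thm30} in the direction that complements of cuts are odd-circuit covers. Both are immediate here, and the linear running time is clear from Step 2.
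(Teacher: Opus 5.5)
Your proof is correct and follows the paper's argument: 2-color the bipartite graph $(V,E_1)$, take the induced cut containing $E_1$, and compare it with an arbitrary cut using the fact that the complement of any cut is an odd-circuit cover. The only differences are that you prove that fact by a parity argument instead of citing Theorem \ref{thm30}, and you argue directly rather than by contradiction. One small correction to your closing remark: non-negativity is used only once, in $w(\delta(A))\ge w(E_1)$, since the statement that complements of cuts are odd-circuit covers is purely combinatorial and needs no sign assumption.
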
 \par
	\begin{proof}
		Because $E\rs E_1$ is an odd-circuit cover, $(V,E_1)$ is bipartite. Fix any partition $V=V_1\cup V_2,V_1\cap V_2=\emptyset$ such that all edges in $E_1$ have one endpoint in $V_1$ and the other in $V_2$. Then the partition $(V_1,V_2)$ induces a cut $E_3$ in $G$. Moreover, $E_3\supseteq E_1$. Now we prove that $E_3$ is a maximum weighted cut of $G$. Suppose $E_4$ is a cut in $G$ with total weight greater than $E_3$. Then by Theorem \ref{thm30}, $E\rs E_4$ is an odd-circuit cover with total weight smaller than $E\rs E_3$. But $E\rs E_3$ has total weight no more than $E\rs E_1$. Then $E\rs E_4$ is an odd-circuit cover with total weight smaller than $E\rs E_1$, a contradiction. Therefore $E_3$ is a maximum weighted cut, and can obviously be found in time $O(m+n)$ if $E_1$ is given.
	\end{proof} \par
	By Theorem \ref{thm31}, we only need to find a minimum weighted odd-circuit cover of $G$. \par
	Given a graph $G=(V,E)$, we distinguish elements in the cycle space $C(G)$ by a measure of parity to be described below, which is similar to the way we distinguish odd and even faces in the setting of planar graphs:
	\begin{definition}
		Suppose $G=(V,E)$ is a graph, and $C\in C(G)$. We say that $C$ is odd, if $C$ has an odd number of edges. Otherwise we say that $C$ is even.
	\end{definition} \par
	Now we prove the analogues of Theorem \ref{thm02} and Corollary \ref{col01} for Eulerian-spanning set on general graphs.
	\begin{theorem}\label{thm32}
		Suppose $G=(V,E)$ is a simple graph, and $\mc{C}=\{C_i\}_{i\in[m]}$ is a Eulerian-spanning set of $G$. Let $\mc{C}_o\subseteq\mc{C}$ be the multiset of odd elements in $\mc{C}$. Suppose $E_1\subseteq E$. We have (1)$\imp$(2)$\imp$(3):\par
		(1) $E_1$ is a minimal odd-circuit cover of $G$ with respect to set inclusion. That is, $E_1$ itself is an odd-circuit cover, but $\forall E_2\subsetneq E_1$, $E_2$ is not an odd-circuit cover. \par
		(2) $\delta_1(E_1) = \mc{C}_o$. Or equivalently, $\forall C\in \mathcal{C},|C\cap E_1|\equiv|C|\pmod{2}$. \par
		(3) $E_1$ is an odd-circuit cover of $G$
	\end{theorem}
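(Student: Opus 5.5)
The plan is to follow the proof of Theorem \ref{thm02} almost verbatim. The only changes are that face boundaries are replaced by elements of $\mc{C}$, and the planar ``enclosed region'' argument is replaced by the spanning property of $\mc{C}$.

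\emph{(1)$\imp$(2).} Fix $C\in\mc{C}$ and write $C\cap E_1=\{e_1,\dots,e_t\}$ ($t=0$ allowed). Since $E_1$ is minimal with respect to inclusion, for each $i$ the set $E_1\rs\{e_i\}$ is not an odd-circuit cover. Hence there is an odd circuit $C_i$ with $C_i\cap E_1=\{e_i\}$. I would then form
\[
C_{tot}=C+\sum_{i=1}^t C_i \in C(G).
\]
Each $e_i$ occurs exactly twice in this sum, once in $C$ and once in $C_i$, and no other edge of $E_1$ occurs at all. Therefore $C_{tot}\cap E_1=\emptyset$.

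Because $C_{tot}$ is a disjoint union of cycles and avoids the odd-circuit cover $E_1$, each of those cycles is even, so $|C_{tot}|$ is even. Parity is additive under $+$ modulo 2, so
\[
0\equiv |C|+\sum_{i=1}^t|C_i|\equiv |C|+t \pmod 2.
\]
This gives $|C\cap E_1|=t\equiv|C|$, which is (2). Note that this step uses only minimality with respect to inclusion, not minimum weight, which is why the hypothesis in (1) is stated that way.

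\emph{(2)$\imp$(3).} It suffices to show that every cycle $D$ with $D\cap E_1=\emptyset$ is even, since every circuit decomposes into cycles. Since $D\in C(G)$ and $\mc{C}$ spans $C(G)$, we can write $D=\sum_{j} C_{j}$ for some sub-multiset of $\mc{C}$; this replaces the planar argument that a cycle encloses a set of faces. Intersecting with $E_1$ is linear over $\Z_2$, so
\[
D\cap E_1=\sum_j (C_j\cap E_1).
\]
Taking cardinalities modulo 2 and applying (2) gives
\[
|D|\equiv\sum_j|C_j|\equiv\sum_j|C_j\cap E_1|\equiv|D\cap E_1|=0 \pmod 2.
\]

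Two points need care, though I do not expect either to be a real obstacle. The first is that the map $X\mapsto|X|\bmod 2$ is a linear functional on $\Z_2^E$; it is, because it is the sum of the coordinates. The second is that $C_{tot}$ may be empty, which is harmless. The main thing to get right is the bookkeeping in (1)$\imp$(2) showing $C_{tot}\cap E_1=\emptyset$, which rests on each $C_i$ meeting $E_1$ only in $e_i$.
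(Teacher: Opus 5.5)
Your proposal is correct and is essentially the paper's own proof. For (1)$\Rightarrow$(2) you use the same auxiliary element $C_{tot}=C+\sum_{i=1}^t C_i$ with $C_{tot}\cap E_1=\emptyset$ and the same parity count. For (2)$\Rightarrow$(3) you use the same expansion of an $E_1$-avoiding cycle as a sum of elements of $\mathcal{C}$, again followed by a parity count.
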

	\begin{proof}
		(1)$\imp$(2): Suppose $C\cap E_1=
		\{e_1,e_2,\cdots,e_t\}$, where $t=0$ is allowed. Because $E_1$ is minimal with respect to set inclusion, $\forall i\in\{1,2,\cdots,t\},E_1\rs\{e_i\}$ is not an odd-circuit cover. That is, $\forall i\in\{1,2,\cdots,t\}$, there is an odd circuit $C_i$ that intersects $E_1$ only in $e_i$. Now let us consider $C_{tot}=C+\sum_{i=1}^tC_i$. Then $C_{tot}\in C(G)$. Therefore $C_{tot}$ is the disjoint union of some circuits (it can happen that $C_{tot}=\emptyset$). Moreover, $C_{tot}\cap E_1=\emptyset$. Therefore, $|C_{tot}|$ must be even, or there is an odd circuit $C_o'\subseteq C_{tot}$ that does not intersect $E_1$, a contradiction. But $\forall i\in\{1,2,\cdots,t\},|C_i|$ is odd. We then have that the parity of $|C|$ must be the same as $t$, which is precisely the conclusion we want. \par
		(2)$\imp$(3): Suppose $C$ is a circuit in $G$ that does not intersect $E_1$. By definition of $\mathcal{C}$, we have $C=\sum_{i=1}^tC_i$, where $\forall i,C_i\in\mc{C}$. Therefore $C\cap E_1=\sum_{i=1}^t(C_i\cap E_1)$. Then $|C|\equiv\sum_{i=1}^t|C_i|\pmod{2},|C\cap E_1|\equiv\sum_{i=1}^t|(C_i\cap E_1)|\pmod{2}$. Since $|C\cap E_1| = 0$, we have $\sum_{i=1}^t|(C_i\cap E_1)|\equiv 0\pmod{2}$. But $|C_i\cap E_1|\equiv|C_i|\pmod{2}$ by (2). Therefore $|C|\equiv\sum_{i=1}^t|C_i|\equiv\sum_{i=1}^t|(C_i\cap E_1)|\equiv 0\pmod{2}$.
	\end{proof} \par
	\begin{theorem}\label{thm33}
		Suppose $G=(V,E)$ is a simple graph, $w:E\to \R^+\cup\{0\}$ is a weight function, and $\mc{C}=\{C_i\}_{i\in[m]}$ is a Eulerian-spanning set of $G$. Let $\mc{C}_o\subseteq\mc{C}$ be the set of odd elements in $\mc{C}$. Let $w_{t}:\Z_2^{E}\to\R^+\cup\{0\},\forall E_s\subseteq E,w_t(E_s)=\sum_{e\in E_s}w(e)$. Suppose $E_1\subseteq E$. Then $E_1$ is a minimum weighted odd-circuit cover of $G$ if and only if 
		$$\exists E_3\subseteq E_1,w_t(E_3)=w_t(E_1),E_3=\argmin_{F\subseteq E:\delta_1(F)=\mc{C}_o}w_t(F)$$
	\end{theorem}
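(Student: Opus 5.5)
We read the statement as: $E_1$ is a minimum weighted odd-circuit cover if and only if $E_1$ contains a set $E_3$ of the same total weight that is a minimizer of $w_t$ over $\{F\subseteq E:\delta_1(F)=\mc{C}_o\}$. The plan is to compare the two optimal values using Theorem \ref{thm32}.

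Let $\mathrm{OPT}_c$ be the minimum weight of an odd-circuit cover and $\mathrm{OPT}_\delta$ the minimum of $w_t(F)$ over $\delta_1(F)=\mc{C}_o$.

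\textbf{Step 1: $\mathrm{OPT}_c=\mathrm{OPT}_\delta$, and the feasible set for $\mathrm{OPT}_\delta$ is nonempty.}
\begin{itemize}
\item $E$ itself is an odd-circuit cover, since every odd circuit is nonempty. It therefore contains an inclusion-minimal odd-circuit cover, because the family is finite.
\item By (1)$\imp$(2) of Theorem \ref{thm32}, every inclusion-minimal odd-circuit cover $F$ satisfies $\delta_1(F)=\mc{C}_o$. So the feasible set is nonempty.
\item By (2)$\imp$(3), every feasible $F$ is an odd-circuit cover, so $\mathrm{OPT}_c\le \mathrm{OPT}_\delta$.
\item Conversely, take a minimum weight odd-circuit cover $D$ and shrink it to an inclusion-minimal cover $D'\subseteq D$. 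Then $\delta_1(D')=\mc{C}_o$ by Theorem \ref{thm32}. Since weights are non-negative, $w_t(D')\le w_t(D)$, so $\mathrm{OPT}_\delta\le\mathrm{OPT}_c$.
\end{itemize}

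\textbf{Step 2: necessity.} Let $E_1$ be a minimum weighted odd-circuit cover. Shrink it to an inclusion-minimal odd-circuit cover $E_3\subseteq E_1$.
\begin{itemize}
\item By Theorem \ref{thm32}, $\delta_1(E_3)=\mc{C}_o$.
\item Non-negativity gives $w_t(E_3)\le w_t(E_1)=\mathrm{OPT}_c$.
\item Feasibility gives $w_t(E_3)\ge\mathrm{OPT}_\delta=\mathrm{OPT}_c$.
\end{itemize}
Hence $w_t(E_3)=w_t(E_1)$ and $E_3$ attains the minimum.

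\textbf{Step 3: sufficiency.} Suppose $E_3\subseteq E_1$ is such a minimizer with $w_t(E_3)=w_t(E_1)$.
\begin{itemize}
\item By (2)$\imp$(3), $E_3$ is an odd-circuit cover, so its superset $E_1$ is also one.
\item Its weight is $w_t(E_1)=w_t(E_3)=\mathrm{OPT}_\delta=\mathrm{OPT}_c$, so $E_1$ is a minimum weighted odd-circuit cover.
\end{itemize}

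\textbf{Main obstacle.} The delicate point is the shrinking step in Step 1, and its reuse in Step 2. Theorem \ref{thm32}(1) needs inclusion-minimality, while optimality is in terms of weight. Zero-weight edges mean a weight-minimum cover need not be inclusion-minimal. Passing to an inclusion-minimal subset, which never increases weight because $w\ge 0$, resolves this. It is also exactly why the statement needs the auxiliary $E_3$ with equal weight rather than $E_1$ itself.

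The $\argmin$ notation should be read as "$E_3$ is one of the minimizers". Finally, interpreting $\delta_1(F)=\mc{C}_o$ for a multiset $\mc{C}$ uses only the parity characterization given in Theorem \ref{thm32}(2), so no issue arises from repeated elements.
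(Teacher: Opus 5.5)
Your proof is correct and follows essentially the same route as the paper: you shrink a minimum-weight cover to an inclusion-minimal one (harmless because $w\ge 0$) and apply both implications of Theorem \ref{thm32}, where the paper phrases each direction as a contradiction rather than through the equality $\mathrm{OPT}_c=\mathrm{OPT}_\delta$. Your version is slightly more explicit than the paper's, since it checks that the feasible set $\{F:\delta_1(F)=\mc{C}_o\}$ is nonempty and that $E_1\supseteq E_3$ is itself an odd-circuit cover, both of which the paper leaves implicit.
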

	\begin{proof}
		$(\imp)$: Suppose $E_1$ is a minimum weighted odd-circuit cover. By possibly removing some edges of zero weight in $E_1$, we get a set $E_3$ that is a minimal odd-circuit cover with respect to set inclusion. By Theorem \ref{thm32}, $\delta_1(E_3)=\mc{C}_o$. Moreover, if there is an $E_4$ such that $\delta_1(E_4)=\mc{C}_o,w_t(E_4)<w_t(E_3)$, then by Theorem \ref{thm32}, $E_4$ is an odd-circuit cover with total weight less than $E_3$, a contradiction. \par
		$(\Leftarrow)$:Suppose $E_3\subseteq E_1$ satisfies
		$$w_t(E_3)=w_t(E_1),E_3=\argmin_{F\subseteq E:\delta_1(F)=\mc{C}_o}w_t(F)$$ \par
		Then $E_3$ is an odd-circuit cover by Theorem \ref{thm32}. Moreover, if there is an odd-circuit cover $E_4$ with total weight less than $E_3$, there must be an edge set $E_5\subseteq E_4$ that is a minimal odd-circuit cover with respect to set inclusion. Therefore $\delta_1(E_5)=\mc{C}_o$ by Theorem \ref{thm32}. But $E_5$ has total weight less than $E_3$, which is a contradiction.
	\end{proof} \par
	The next step is also similar to that in Section \ref{s0}. For each edge $e\in E$, $\delta_1(e)$ is a subset of $\mathcal{C}$. By Theorem \ref{thm32}, it suffices to find an $E_3$ with minimum weight such that $\delta_1(E_3)=\mc{C}_o$, because $E_3$ itself would be a minimum weighted odd-circuit cover. Since $\delta_{1}(E_3)=\triangle_{e\in E_3}\delta_1(e)$, the problem of finding a minimum odd-circuit cover is converted to the following "weighted set symmetric difference" problem, or WSSD:
	\begin{problem}[\it{WSSD}]\label{prb31}
		Suppose $S$ is a finite multiset, and $\mathcal{S}=\{S_i\}_{i=1}^{m}\subseteq\Z_2^S$ is a multiset consisting of subsets of $S$. Moreover, there is a weight function $w:\mc{S}\to\R^+\cup\{0\}$. Suppose we are given a target set $T\subseteq S$. The goal is to find a multiset $\mathcal{T}\subseteq\mc{S}$, such that $\triangle_{S_i\in\mc{T}}S_i=T$, and the total weight of sets in $\mc{T}$ is minimal, or claim that no such $\mc{T}$ exists.
	\end{problem} \par
	The decision version of Problem \ref{prb31} is as follows:
	\begin{problem}[\it{decision version of Problem \ref{prb31}}]\label{prb32}
		Suppose $S$ is a finite multiset, and $\mathcal{S}=\{S_i\}_{i=1}^{m}\subseteq\Z_2^S$ is a multiset consisting of subsets of $S$. Moreover, there is weight function $w:\mc{S}\to\R^+\cup\{0\}$. Suppose we are given a target set $T\subseteq S$, and a threshold real number $r$. The goal is to decide, whether there is a multiset $\mathcal{T}\subseteq\mc{S}$, such that $\triangle_{S_i\in\mc{T}}S_i=T$, and the total weight of sets in $\mc{T}$ is no more than $r$.
	\end{problem}
	\begin{remark}
		We will prove in Appendix \ref{a1} that Problem \ref{prb32} is NP-complete, even under strong restrictions.
	\end{remark} \par
	Now we show that we can convert a weighted maxcut problem to an instance of WSSD problem efficiently:
	\begin{theorem}\label{thm34}
		Suppose we are given a simple graph $G=(V,E)$, a weight function $w_G:E\to\R^+\cup\{0\}$, and a Eulerian-spanning set $\mc{C}$ of $G$. Let $n=|V|,m=|E|$. Then the problem of finding the minimum weighted odd-circuit cover of $G$ can be converted to an instance of Problem \ref{prb31} in time $O(m|\mc{C}|+nm)$. Moreover, if we have the answer of the converted problem, we can recover the solution of the minimum weighted odd-circuit cover in time $O(m|\mc{C}|)$ 
	\end{theorem}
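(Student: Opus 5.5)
The plan is to build the WSSD instance directly from the data of Theorem \ref{thm33}: take $S=\mc{C}$, $\mc{S}=\{\delta_1(\{e\}):e\in E\}$ as a multiset indexed by $E$, weights $w(\delta_1(\{e\}))=w_G(e)$, and target $T=\mc{C}_o$. Since $\delta_1(F)=\triangle_{e\in F}\delta_1(\{e\})$ for every $F\subseteq E$, a subfamily $\mc{T}\subseteq\mc{S}$ corresponds bijectively (through the indexing by $E$) to an edge set $F$. Under this correspondence, $\triangle_{S_i\in\mc{T}}S_i=T$ holds iff $\delta_1(F)=\mc{C}_o$, and the two weights coincide. So an optimal $\mc{T}$ gives an $F$ minimizing $w_t$ subject to $\delta_1(F)=\mc{C}_o$. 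By Theorem \ref{thm33} (the $\Leftarrow$ direction with $E_1=E_3=F$), this $F$ is a minimum weighted odd-circuit cover. Theorem \ref{thm32} guarantees feasibility: a minimal odd-circuit cover always exists, for instance inside $E$ itself. Hence the ``no solution'' branch of WSSD never occurs.

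The remaining work is the running time.

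\textbf{Building the instance.} I would represent each $C\in\mc{C}$ as a $0/1$ vector over $E$. This is $m$ bits per element, and forming it costs $O(m|\mc{C}|)$ in total if $\mc{C}$ is given as edge lists; it can be absorbed if $\mc{C}$ is already given as indicator vectors.
\begin{enumerate}[(i)]
\item Compute $\mc{C}_o$ by counting the edges of each $C$, in $O(m|\mc{C}|)$.
\item Compute each $\delta_1(\{e\})=\{C\in\mc{C}:e\in C\}$ by scanning the $|\mc{C}|\times m$ incidence matrix column by column, again in $O(m|\mc{C}|)$.
\item Copy the weights, in $O(m)$.
\end{enumerate}
The $O(nm)$ term covers the input-handling overhead. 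Examples are reading $G$ and building the edge indexing, or verifying/normalizing the representation of the elements of $\mc{C}$ as subsets of $E$ via adjacency structures of size $O(nm)$. I expect this to be generous bookkeeping rather than a real bottleneck. If needed, I would attribute it to the check that each $C$ is Eulerian, computing $\partial_1(C)$ in $O(m)$ per $C$, or to building an $n\times n$ edge lookup table.

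\textbf{Recovering the solution.} Given $\mc{T}$, read off the indices $e$ to obtain $F\subseteq E$. This takes $O(m)$. Alternatively one can re-verify $\delta_1(F)=\mc{C}_o$ by computing $|C\cap F|$ for every $C$, which takes $O(m|\mc{C}|)$. Either way the recovery fits within the claimed $O(m|\mc{C}|)$ bound. The correctness of the recovered set as a minimum weighted odd-circuit cover is exactly the statement of Theorem \ref{thm33}, together with Theorem \ref{thm32} for $F$ being a cover.

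The main (mild) obstacle is being careful that WSSD is posed over multisets. Two edges may have identical $\delta_1$-images, and $\delta_1(\{e\})$ may be empty. Indexing $\mc{S}$ by $E$ handles both, and keeps the bijection between subfamilies and edge sets, so that the weights correspond exactly.
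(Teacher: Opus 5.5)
Your proposal is correct and matches the paper's proof: it builds the same instance (ground set indexed by $\mc{C}$, one copy of $\delta_1(\{e\})$ per edge weighted by $w_G(e)$ and labelled by $e$, target $\mc{C}_o$) and uses the same $O(m|\mc{C}|)$ accounting for construction and recovery. Your explicit correctness and feasibility argument via Theorems \ref{thm32} and \ref{thm33} spells out what the paper leaves to the preceding discussion. The $O(nm)$ term is only an upper bound, so leaving it unexplained is no weaker than the paper, which does not justify it either.
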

	\begin{proof}
		Pick any bijection $h:\mc{C}\to[|\mc{C}|]$. Let $S=[|\mc{C}|]$. For every $e\in E$, a copy of $h(\delta_1(e))$ is included in $\mc{S}$, and weighted by $w_G(e)$. Let $T=h(\mc{C}_o)$. Then we get an instance of Problem \ref{prb31}.\par
		Moreover, we mark each set $h(\delta_1(e))$ with the edge $e$ corresponding to it during our transformation. \par
		Now we consider the computational complexity. Because each element in $\mc{C}$ has length at most $m$, the reductions described above can be done in time $O(m|\mc{C}|+nm)$. Moreover, since we keep track of the edges that correspond to sets in $\mc{S}$, we can recover the minimum weighted odd-circuit cover from the solution of the converted problem in time $O(m|\mc{C}|)$. 
	\end{proof}
	
	\section{WSSD is P when $\forall S_i,|S_i|\leq2$}\label{s5}
	When $\forall S_i,|S_i|\leq2$, we can convert the WSSD problem to a weighted vertex pairing problem as in Section \ref{s0}, and solve it in polynomial time. For completeness, we first present a polynomial-time algorithm solving the weighted vertex pairing problem.
	\begin{lemma}\label{lem51}
		Suppose we are given a graph $G=(V,E)$, a weight function $w:E\to\R^+\cup\{0\}$, and a target vertex set $V_1\subseteq V$. Let $n=|V|,m=|E|$. Then there is an $O(n^2m+n^3)$ algorithm finding the edge set $F\subseteq E$ of minimum total weight such that $\partial_1(F)=V_1$, or claiming that no such $F$ exists.
	\end{lemma}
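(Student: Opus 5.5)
The plan is to reduce the problem to a minimum-weight perfect matching on the target set $V_1$, with pair costs given by shortest-path distances in $G$. Nonnegativity of $w$ is what makes this valid.

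First, the trivial cases. Self-loops have zero boundary and nonnegative weight, so they can be discarded. Every component of $(V,F)$ has an even number of odd-degree vertices, so a feasible $F$ exists if and only if every connected component of $G$ contains an even number of vertices of $V_1$. This can be checked in $O(n+m)$ time by BFS, and if it fails we report that no such $F$ exists.

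Second, the structural claim. Let $F$ be any set with $\partial_1(F)=V_1$. The subgraph $(V,F)$ has odd-degree vertex set exactly $V_1$. Repeatedly extracting a path between two odd-degree vertices in the same component of $F$ decomposes $F$ into $|V_1|/2$ edge-disjoint paths pairing up $V_1$, plus an element of $C(G)$. Hence
\[
w_t(F)\ \ge\ \sum_{\{u,v\}\in M} d(u,v)
\]
for some perfect matching $M$ of $V_1$, where $d$ is the shortest-path distance under $w\ge 0$. Conversely, given any perfect matching $M$ of $V_1$ whose pairs lie in common components, take shortest paths $P_{uv}$ and set $F=\triangle_{\{u,v\}\in M}P_{uv}$. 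Then $\partial_1(F)=\sum\{u\}+\{v\}=V_1$ and $w_t(F)\le\sum d(u,v)$, since the symmetric difference only drops edges and weights are nonnegative. So the optimum equals the minimum-weight perfect matching value in the complete graph $K_{V_1}$ with costs $d(u,v)$, taken as $+\infty$ between components. The symmetric difference of shortest paths for an optimal matching is an optimal $F$.

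Third, the algorithm and its complexity.
\begin{enumerate}
\item Run Dijkstra from each vertex of $V_1$. With a binary heap this takes $O(n\cdot m\log n)$ time, or $O(n(m+n\log n))$ with Fibonacci heaps; in either case it is within $O(n^2m+n^3)$. Record predecessor trees so paths can be recovered.
\item Solve a minimum-weight perfect matching on the at most $n$ vertices of $V_1$ with Edmonds' weighted blossom algorithm in $O(n^3)$ time, using only pairs within the same component.
\item Recover each path in $O(n)$ time and take the symmetric difference, for a total of $O(n^2)$.
\end{enumerate}
The total running time is $O(n^2m+n^3)$.

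The main obstacle is the decomposition step, i.e.\ showing rigorously that any $F$ with boundary $V_1$ dominates some matching cost. The approach is induction on $|V_1|$. If $V_1\neq\emptyset$, pick $u\in V_1$ and walk in $F$ from $u$ without reusing edges. Because $u$ has odd degree and every vertex outside $V_1$ has even degree, the walk must end at some other odd vertex $v$, which gives a $u$–$v$ trail inside $F$. Removing the trail's edges flips the parity at $u$ and $v$ only, and the trail's weight is at least $d(u,v)$. Applying induction to the remaining edge set with target $V_1\setminus\{u,v\}$ completes the claim, and the nonnegativity of $w$ handles any leftover edges.
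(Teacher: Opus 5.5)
Your proposal is correct and takes the same route as the paper: shortest-path distances between vertices of $V_1$, a minimum-weight perfect matching on $V_1$, and the symmetric difference of the chosen shortest paths, within the stated $O(n^2m+n^3)$ bound. What you add is rigor. Your trail-extraction induction gives the lower bound, and the inequality $w_t(\triangle_{\{u,v\}\in M} P_{uv})\le\sum_{\{u,v\}\in M} d(u,v)$ gives the upper bound; the paper only sketches these steps, through its decomposition into paths and cycles and its remark that the chosen paths can overlap only in zero-weight edges.
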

	\begin{proof}
		Because $\partial_1(F)=V_1$, the subgraph $G_1=(V,F)$ can be decomposed into disjoint 2-paths connecting vertices in $V_1$, and cycles. We can remove all cycles, because the image of a cycle under $\partial_1$ is zero, and $F$ is of minimum total weight. Therefore, to find $F$, we only need to find the shortest path connecting every pair of vertices in $V_1$ (there can be no such paths), and then compute a minimum weighted perfect matching. The procedure is described formally as follows: \par
		We construct a graph $G_{path}=(V_1,E_{path})$, where $(v_i,v_j)\in E_{path}$ if and only if there is a path connecting $v_i$ and $v_j$, and we weight $(v_i,v_j)$ by the minimum total weight among all paths connecting $v_i$ and $v_j$. Then we compute the minimum weighted perfect matching of $G_{path}$. The symmetric difference of paths corresponding to edges in the perfect matching is the $F$ we want (we note that the paths can only intersect in edges with zero weight, otherwise the symmetric difference of these paths has boundary $V_1$, and therefore corresponds to a perfect matching in $G_{path}$ with total weight even smaller than that of the minimum perfect matching we just computed, a contradiction). \par
		Now we consider the computational complexity. The shortest path problem can be solved by Dijkstra's Algorithm in time $O(n^2m)$. Deciding whether there is a perfect matching in $G_{path}$, and computing the minimum weighted one when there is can both be done in time $O(n^3)$ by the weighted matching algorithm in \cite{110}. Therefore the total running time is $O(n^2m+n^3)$.
	\end{proof}
	\begin{theorem}\label{thm51}
		Suppose $S$ is a finite multiset, $|S|=n$, and $\mathcal{S}=\{S_i\}_{i=1}^{m}\subseteq\Z_2^S$ is a multiset consisting of subsets of $S$. Moreover, every set in $\mc{S}$ has size at most 2. We are also given a weight function $w:\mc{S}\to\R^+\cup\{0\}$. Suppose there is a target set $T\subseteq S$. Then there is an $O(n^2m+n^3)$-time algorithm finding $\mathcal{T}\subseteq\mc{S}$ such that $\triangle_{S_i\in\mc{T}}S_i=T$ and the total weight of sets in $\mc{T}$ is minimum, or claiming that no such $\mc{T}$ exists.
	\end{theorem}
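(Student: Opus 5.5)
The plan is to reduce the instance of Problem \ref{prb31} with all $|S_i|\le 2$ to the weighted vertex pairing problem and then apply Lemma \ref{lem51}. Sets in $\mc{S}$ come in three kinds: empty sets, singletons, and pairs. Empty sets contribute nothing to any symmetric difference and have non-negative weight, so I will discard them; an optimal $\mc{T}$ never needs them.

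To handle singletons, I will introduce one auxiliary vertex $v_0$ and build the graph $H=(S\cup\{v_0\},E_H)$. Each pair $\{a,b\}\in\mc{S}$ becomes an edge $(a,b)$, and each singleton $\{a\}$ becomes an edge $(a,v_0)$. Every edge inherits the weight of its set, and parallel edges are allowed. The target is $V_1=T$ if $|T|$ is even, and $V_1=T\cup\{v_0\}$ if $|T|$ is odd.

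The key observation is the following. For a subfamily $\mc{T}$ with corresponding edge set $F$, the restriction of $\partial_1(F)$ to $S$ is exactly $\triangle_{S_i\in\mc{T}}S_i$. Since $|\partial_1(F)|$ is always even, the coordinate at $v_0$ is then forced to equal the parity of $|\triangle_{S_i\in\mc{T}}S_i|$. Hence $\triangle_{S_i\in\mc{T}}S_i=T$ if and only if $\partial_1(F)=V_1$. This gives a weight-preserving bijection between feasible solutions of the two problems, so a minimum edge set $F$ from Lemma \ref{lem51} yields a minimum $\mc{T}$, and infeasibility transfers as well. I will record, for each edge, the set it came from, so that $\mc{T}$ can be recovered.

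For the running time, $H$ has $n+1$ vertices and at most $m$ edges, and building it takes $O(n+m)$ time. Lemma \ref{lem51} then gives $O((n+1)^2m+(n+1)^3)=O(n^2m+n^3)$. The main point needing care is the parity argument at $v_0$, which justifies adding $v_0$ to the target exactly when $|T|$ is odd. Everything else is routine bookkeeping.
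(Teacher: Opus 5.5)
Your proposal is correct and is essentially the paper's proof: discard empty sets, add one auxiliary vertex joined to the element of each singleton, and apply Lemma \ref{lem51} to the resulting graph with $n+1$ vertices and $m$ edges. Your parity argument, which fixes the target as $T$ or $T\cup\{v_0\}$ according to $|T| \bmod 2$, sharpens the paper's condition $\partial_1(E_{\mc{T}})\in\{T,T\cup\{u\}\}$: it shows that only one of the two targets can be feasible, so a single call to Lemma \ref{lem51} suffices.
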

	\begin{proof}
		Obviously we can ignore the sets of size 0, i.e. empty sets. Therefore we assume that all sets in $\mc{S}$ are of size 1 or 2. Let $\mc{S}_1=\{S_i\in\mc{S}:|S_i|=1\},\mc{S}_2=\{S_i\in\mc{S}:|S_i|=2\}$. We construct a graph $G_s=(V_s,E_s)$ with a weight function $w_s:E_s\to\R^+\cup\{0\}$ as follows:
		$$V_s=S\cup\{u\},E_s=E_A\cup E_B$$
		$$E_A=\{(s_i,s_j):\{s_i,s_j\}\in\mc{S}_2\}$$
		$$E_B=\{(s_i,u):\{s_i\}\in\mc{S}_1\}$$
		$$\forall(s_i,s_j)\in E_A,w_s(s_i,s_j)=w(\{s_i,s_j\})$$
		$$\forall(s_i,u)\in E_B,w_s(s_i,u)=w(\{s_i\})$$ \par
		Then there is a one-to-one correspondence between sets in $\mc{S}$ and edges in $G_s$. For every $\mathcal{T}\subseteq\mc{S}$, denote its corresponding edge set by $E_{\mc{T}}$. Then the total weight of edges in $E_{\mc{T}}$ is equal to the total weight of sets in $\mc{T}$. Moreover, $\triangle_{S_i\in \mc{T}}S_i=T$ if and only if $\partial_1(E_{\mc{T}})\in\{T,T\cup\{u\}\}$. Therefore, it suffices to find an edge set $F\subseteq E_s$ of minimum total weight such that $\partial_1\in\{T,T\cup\{u\}\}$. By Lemma \ref{lem51}, this can be done in time $O((n+1)^2m+(n+1)^3)=O(n^2m+n^3)$.
	\end{proof}
	\section{Reducing the running time by planarity} \label{s6}
	We first give a fixed-parameter polynomial algorithm for Eulerian-spanning set $\mc{C}$ satisfying the $k$-frequent appearance condition.
	\begin{theorem}\label{thm52}
		Suppose $G=(V,E)$ is a simple graph, $w:E\to\R^+\cup\{0\}$ is a weight function, and $\mc{C}=\{C_i\}_{i\in I}$ is a Eulerian-spanning set of $G$ included as input. Assume the number of edges that appear in at least three elements in $\mc{C}$ is $k$. Let $n=|V|,m=|E|$. Then there is an algorithm finding the maximum weighted cut of $G$ that runs in time $O(2^k(\max\{|\mc{C}|,n\}^2m+\max\{|\mc{C}|,n\}^3)$.
	\end{theorem}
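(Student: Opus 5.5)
Following the paper's chain of reductions, we first reduce maxcut to finding a minimum weighted odd-circuit cover (Theorem \ref{thm31}). That in turn reduces to a WSSD instance (Theorems \ref{thm33} and \ref{thm34}): take $S=[|\mc{C}|]$, one set $h(\delta_1(e))$ of weight $w(e)$ for each $e\in E$, and target $T=h(\mc{C}_o)$. In this instance the set attached to $e$ has size equal to the number of elements of $\mc{C}$ containing $e$. So every set has size at most $2$ except the at most $k$ sets attached to the edges of
\[
E_{\geq 3}=\{e\in E:|\{C\in\mc{C}:e\in C\}|\geq 3\},\qquad |E_{\geq 3}|=k .
\]
We remove these large sets by brute-force branching and then invoke Theorem \ref{thm51} on what remains.

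Since $\delta_1(F)=\sum_{e\in F}\delta_1(e)$ over $\Z_2$, any feasible $F$ splits as $F=A\cup B$ with $A=F\cap E_{\geq 3}$ and $B=F\rs E_{\geq3}$. Then $\delta_1(B)=\mc{C}_o+\delta_1(A)$. The branching runs as follows.
\begin{enumerate}[(i)]
\item Enumerate all $2^k$ subsets $A\subseteq E_{\geq3}$.
\item For each $A$, compute the modified target $T_A=h(\mc{C}_o)\triangle h(\delta_1(A))$.
\item Solve the WSSD instance with set family $\{h(\delta_1(e)):e\in E\rs E_{\geq3}\}$, whose sets all have size $\le 2$, and target $T_A$, using Theorem \ref{thm51}.
\item Among all branches that return a solution $B_A$, output the $A\cup B_A$ minimising $w_t(A)+w_t(B_A)$.
\end{enumerate}

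Correctness follows because the optimal $F^*$ from Theorem \ref{thm33} appears in the branch $A=F^*\cap E_{\geq3}$. In that branch $F^*\rs E_{\geq3}$ is feasible, so $w_t(B_A)\le w_t(F^*\rs E_{\geq 3})$. Conversely, every candidate output satisfies $\delta_1(A\cup B_A)=\mc{C}_o$, so the minimum over branches attains the optimum. Such a set always exists, since a minimal odd-circuit cover exists (e.g. a subset of $E$) and satisfies the condition by Theorem \ref{thm32}. By Theorem \ref{thm33} the minimiser is a minimum weighted odd-circuit cover, and Theorem \ref{thm31} then converts it to a maximum cut in $O(m+n)$.

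For the running time:
\begin{itemize}
\item Building the instance costs $O(m|\mc{C}|+nm)$ by Theorem \ref{thm34}, and identifying $E_{\geq3}$ comes out of the same pass.
\item In each branch, computing $T_A$ takes $O(k|\mc{C}|)$.
\item The call to Theorem \ref{thm51} has $|S|=|\mc{C}|$ and at most $m$ sets, so it costs $O(|\mc{C}|^2m+|\mc{C}|^3)$.
\end{itemize}
Summing over $2^k$ branches and absorbing the preprocessing into $\max\{|\mc{C}|,n\}^2m$ gives the stated bound $O(2^k(\max\{|\mc{C}|,n\}^2m+\max\{|\mc{C}|,n\}^3))$.

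The main obstacle is not technical. It is making sure the decomposition loses nothing: the $\Z_2$-linearity of $\delta_1$ must be used correctly, and the interplay with Theorem \ref{thm33}'s "$\exists E_3\subseteq E_1$" formulation must be handled carefully when zero-weight edges are present.
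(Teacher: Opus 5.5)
Your proposal is correct and follows the paper's proof essentially step for step. It uses the same WSSD instance from Theorems \ref{thm33} and \ref{thm34}, enumerates all $2^k$ subsets of the high-frequency sets with the target shifted by their symmetric difference, and makes one call to Theorem \ref{thm51} per branch. Your argument that the optimum $F^*$ is captured in the branch $A=F^*\cap E_{\geq 3}$, and that a feasible set exists via Theorem \ref{thm32}, only spells out what the paper asserts as obvious.
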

	\begin{proof}
		By Theorem \ref{thm31}, we only need to find the minimum weighted odd-circuit cover of $G$. By Theorem \ref{thm33},\ref{thm34}, it suffices to solve an instance $\mc{I}$ of Problem \ref{prb31} as follows: Pick any bijection $h:\mc{C}\to[|\mc{C}|]$. Let $S=[|\mc{C}|]$. For every $e\in E$, a copy of $h(\delta_1(e))$ is included in $\mc{S}$, and weighted by $w(e)$. Let $T=h(\mc{C}_o)$. Let $\mc{S}_l$ be the multiset consisting of copies of $h(\delta_1(e))$ for every $|\delta_1(e)|\geq 3$. then $|\mc{S}_l|=k$ by our assumption. Our algorithm to solve the problem is as follows: for every $\mc{R}\subseteq\mc{S}_l$, we find a $\mc{T}_1\subseteq\mc{S}\rs\mc{S}_l$ with minimum total weight such that $\triangle_{S_i\in\mc{T}_1}S_i=T\triangle(\triangle_{S_i\in\mc{R}}S_i)$, or claim that no such $\mc{T}_1$ exists. But the above problem satisfies the condition in Theorem \ref{thm51}, and therefore can be solved in time $O(\max\{|\mc{C}|,n\}^2m+\max\{|\mc{C}|,n\}^3)$. It is obvious that $\mc{R}\cup\mc{T}_1$ is a candidate solution of $\mc{I}$. Then, among our candidate solutions, we choose the one with minimum total weight, which must be the solution to $\mc{I}$. \par
		We then consider the computational complexity of the whole procedure. Converting the problem takes time $O(m|\mc{C}|+nm)$. Finding all candidate solutions takes time $O(2^k(\max\{|\mc{C}|,n\}^2m+\max\{|\mc{C}|,n\}^3)$, and choosing the best one takes at most the same quantity of time. Recovering the minimum weighted odd-circuit cover takes time $O(m|\mc{C}|)$, and recovering the maximum weighted cut from the minimum weighted odd-circuit cover takes time $O(m)$. Therefore, the total running time is $O(2^k(\max\{|\mc{C}|,n\}^2m+\max\{|\mc{C}|,n\}^3)$.
	\end{proof} \par
	We will prove that we can find a Eulerian-spanning set satisfying $k$-frequent appearance condition for $k$-contraction apex graphs efficiently later. Despite this, we cannot apply the above theorem in a black-box way to get Theorem \ref{thm11}, because the running time is far from what we desired. In fact, we have to utilize the planarity of $G/P_k(G)$ to cut down the running time. The remaining of this section is devoted to proving Theorem \ref{thm11}. \par
	We present definitions and set up notations first. \par
	\begin{definition}[\it{edge contraction for graphs}]\label{def51}
		Suppose $G=(V,E)$ is a graph. For $E_1\subseteq E$, we denote the resulting graph after contracting all edges in $E_1$ while preserving multiple edges and self-loops by $G/ E_1$. When $E_1=\{e\}$ is a set of a single edge, we use $G/e$ as a shorthand for $G/\{e\}$.
	\end{definition}
	Therefore the contraction of an edge set $E_1$ can be viewed as contracting edges in $E_1$ one by one. \par
	We need a definition of edge contraction for subgraphs.
	\begin{definition}[\it{edge contraction for subgraphs}]
		Suppose $G=(V,E)$ is a graph, and $H=(V_2,E_2)$ is a subgraph of $G$. Let $e\in E,e=(v_1,v_2)$. If $e\in E_2$, the contraction $H/e$ is defined exactly the same way as in Definition \ref{def51}. When $e\notin E_2$, the contraction is defined as follows: $H/e$ is the graph obtained by seeing $v_1$ and $v_2$ as equal, while preserving multiple edges and self-loops. The contraction of a subgraph with respect to an edge set $E_1\subseteq E$, which we will denote by $H/E_1$, is just the graph obtained by contracting edges in $E_1$ one by one.
	\end{definition}
	\begin{remark}
		One can easily verify that $H/E_1$ is well defined. Moreover, for any subgraph $H$ and edge set $E_1\subseteq E$, $H/E_1$ can be viewed as a subgraph of $G/E_1$.
	\end{remark} \par
	We still need a lemma before we explain how to find a Eulerian-spanning set for $k$-contraction apex graphs.
	\begin{lemma}\label{lem61}
		Suppose $G=(V,E)$ is a graph, $e\in E$, and $\mc{C}$ is a Eulerian-spanning set of $G$. Then $\mc{C}/e\triangleq\{C_i/e: C_i\in\mc{C},C_i/e\text{ has at least one edge}\}$ is a Eulerian-spanning set of $G/ e$. Moreover, for any edge $e_1$ in $G/ e$, the number of elements in $\mc{C}/ e$ containing it is no more than the number of elements in $\mc{C}$ containing its corresponding edge in $G$.
	\end{lemma}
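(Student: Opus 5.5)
The plan is to relate the edge sets of $G$ and $G/e$ through the natural map, and then check two things: that every $C_i/e$ is Eulerian in $G/e$, and that the $C_i/e$ span $C(G/e)$. Write $e=(v_1,v_2)$. Every edge $f\neq e$ of $G$ corresponds to exactly one edge $\bar f$ of $G/e$, and this is a bijection $E\rs\{e\}\to E(G/e)$. An edge of $G$ parallel to $e$ becomes a self-loop, and if $e$ is itself a self-loop nothing changes except that $e$ disappears. Let $\pi:\Z_2^E\to\Z_2^{E(G/e)}$ be the linear map that sends $\{e\}$ to $0$ and $\{f\}$ to $\{\bar f\}$. By the definition of contraction for subgraphs, $C_i/e$ has edge set exactly $\pi(C_i)$; the vertices are merged, but only the edges matter here. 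The second assertion of the lemma follows at once: an edge $\bar f$ lies in $C_i/e$ if and only if $f\in C_i$, so the number of elements of $\mc{C}/e$ containing $\bar f$ is at most the number of elements of $\mc{C}$ containing $f$. There is possible inequality only because empty elements $C_i/e$ are discarded, and those never contain $\bar f$ anyway.

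The first step is to show that $\pi$ maps $C(G)$ into $C(G/e)$. Take $C\in C(G)$. Every vertex of $G/e$ other than the merged vertex $v$ has the same degree in $\pi(C)$ as in $C$, since self-loops contribute $2$ and are preserved. For the merged vertex, the degree of $v$ in $\pi(C)$ equals $\deg_C(v_1)+\deg_C(v_2)-2\cdot[e\in C]$, where edges parallel to $e$ become loops counting $2$, consistent with counting $1+1$ before. This is even. Hence every nonempty $C_i/e$ lies in $C(G/e)$, and $\mc{C}/e$ is a valid multiset of Eulerian subgraphs.

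The main step, and the expected obstacle, is showing that $\pi(C(G))=C(G/e)$, that is, surjectivity. Given $D\in C(G/e)$, let $D'=\pi^{-1}(D)\cap(E\rs\{e\})$ be its lift to $G$. Vertices other than $v_1,v_2$ have even degree in $D'$. The degrees of $v_1$ and $v_2$ in $D'$ sum to the degree of $v$ in $D$, which is even, so they have the same parity. If both are even, set $C=D'$. If both are odd, set $C=D'+\{e\}$; this requires $e$ to be a non-loop edge, which it is, since otherwise no merging occurs and the odd case cannot arise. In either case $C\in C(G)$ and $\pi(C)=D$.

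Finally, write $C=\sum_j C_{i_j}$ with $C_{i_j}\in\mc{C}$, which is possible because $\mc{C}$ spans $C(G)$. Linearity gives $D=\sum_j\pi(C_{i_j})$. Terms with $\pi(C_{i_j})=0$ are exactly the discarded empty $C_{i_j}/e$ and can be dropped. Therefore $D$ lies in the span of $\mc{C}/e$, and $\mc{C}/e$ is a Eulerian-spanning set of $G/e$.
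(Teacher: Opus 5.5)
Your proof is correct and follows the same route as the paper's: lift an element of $C(G/e)$ to some $C'\in C(G)$, write $C'$ as a sum of elements of $\mathcal{C}$, and push that sum forward under contraction, reading the counting claim off the edge bijection $E\setminus\{e\}\to E(G/e)$. The paper calls two steps obvious without proof: that contraction sends $C(G)$ into $C(G/e)$, and that a lift exists. You justify the first by the degree count at the merged vertex and the second by adding $e$ when the lift has odd degree at its endpoints; note also that the counts in the second assertion are actually equal, since the discarded empty elements contain no edges.
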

	\begin{proof}
		Obviously $\mc{C}/ e\subseteq C(G/ e)$. Suppose that $C\in C(G/ e)$, then $\exists C'\in C(G), C'/ e=C$. If $C'$ is the symmetric difference between $C_1,C_2,\cdots,C_l\in\mc{C}$, then $C$ is obviously the symmetric difference between $C_1/ e,C_2/ e,\cdots,C_l/ e$. The latter statement is obvious as well since we do not merge multiple edges or remove self-loops.
	\end{proof}
	\begin{theorem}\label{thm61}
		Suppose $G=(V,E)$ is a graph. Let $n=|V|,m=|E|$. Then the following two statements are equivalent:\par
		(1) There is a Eulerian-spanning set $\mc{C}$ of $G$ such that the number of edges that appear in at least three elements in $\mc{C}$ is at most $k$.\par
		(2) $G$ is $k$-contraction apex. \par
		Moreover, if $G$ is simple, we have: \par
		(I) If condition (2) is satisfied and $P_k(G)$ is given, we can compute a Eulerian-spanning set for $G$ which has size $O(m)$ and satisfies condition (1) in time $O(\max\{m,n\}(k+1)^2)$. \par
		(II) If condition (1) is satisfied, we can find an edge set $P_k(G)$ within time $O(m|\mc{C}|)$.
	\end{theorem}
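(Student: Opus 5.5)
The plan is to prove (2)$\imp$(1) constructively, which also gives (I), and to prove (1)$\imp$(2) by contracting the frequently-appearing edges and then invoking MacLane's planarity criterion, which also gives (II).

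\textbf{Step 1: (2)$\imp$(1) and (I).} Let $P=P_k(G)=\{e_1,\dots,e_l\}$ with $l\le k$, and let $G_j=G/\{e_1,\dots,e_j\}$, so that $G_l$ is planar. Fix a planar embedding of $G_l$ (linear time). The face boundaries of $G_l$ form a Eulerian-spanning set $\mc{C}_l$ of size $O(m)$, and every edge lies in at most two of its elements. I will lift a Eulerian-spanning set from $G_j$ to $G_{j-1}$ one contraction at a time, in reverse order.

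Write $e=e_j=(v_1,v_2)$ in $G_{j-1}$. Every edge of $G_j$ is an edge of $G_{j-1}$ other than $e$. So each $C\in\mc{C}_j$ is an edge set of $G_{j-1}$, and $\partial_1(C)$ computed in $G_{j-1}$ is either $\emptyset$ or $\{v_1,v_2\}$. Define the lift $C'$ to be $C$ in the first case and $C+\{e\}$ in the second; then $C'\in C(G_{j-1})$.

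To see that the lifts span, take any $D\in C(G_{j-1})$. The set $D\rs\{e\}$ is Eulerian in $G_j$, hence a sum of elements of $\mc{C}_j$. The corresponding sum of lifts is Eulerian in $G_{j-1}$ and differs from $D$ at most in $e$. If $e$ is not a loop, $\{e\}$ is not Eulerian, so the two sets coincide. If $e$ is a loop in $G_{j-1}$, I simply add the element $\{e\}$ to the set.

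Every edge other than $e$ keeps its multiplicity, because its membership in each element is unchanged. Hence each lifting step creates at most one new edge appearing in $\ge 3$ elements, and after $l$ steps at most $k$ such edges exist. This proves (1).

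For the running time, I process $P$ in reverse order. For each face element I maintain only its degree parities at the (merged) endpoints of edges of $P$, of which there are at most $2k$ classes. Computing each element's initial parities at these classes costs $O(m(k+1))$ in total. Each lifting decision then costs $O(k+1)$ per element per step, and there are $O(m)$ elements and $l\le k$ steps, which gives $O(\max\{m,n\}(k+1)^2)$ overall.

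\textbf{Step 2: (1)$\imp$(2) and (II).} Let $B$ be the set of edges lying in at least three elements of $\mc{C}$. By hypothesis $|B|\le k$, and $B$ is computable in $O(m|\mc{C}|)$ by counting memberships. Contract the edges of $B$ one at a time and apply Lemma~\ref{lem61} at each step. The result is that $\mc{C}/B$ is a Eulerian-spanning set of $G/B$ in which every edge lies in at most two elements; the edges of $B$ themselves, which may become loops, are gone or are absorbed into merged vertices.

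Now extract a linearly independent spanning subfamily of $\mc{C}/B$. Since this is a subset, each edge still lies in at most two of its members, so it is a basis of $C(G/B)$ of this kind. By MacLane's planarity criterion, a graph is planar if and only if its cycle space has a basis in which each edge appears in at most two members. The criterion is stated for such bases of Eulerian subgraphs and is unaffected by loops or parallel edges. Therefore $G/B$ is planar and $B$ serves as $P_k(G)$. Only the counting step is algorithmic, so this also gives (II).

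\textbf{Main obstacle.} I expect the delicate part to be the careful application of MacLane's theorem. One must check that it tolerates multigraphs with loops, and that reducing a spanning family to a basis does not break the two-membership condition; removing loops and merging parallel classes should reduce to the simple-graph version. The second point needing care is the bookkeeping in (I) that achieves the stated $(k+1)^2$ factor, in particular tracking how the endpoints of later edges of $P$ merge under earlier contractions.
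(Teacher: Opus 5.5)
Your proof is correct, and it differs from the paper's mainly in the (2)$\imp$(1) direction. For (1)$\imp$(2) and (II) you follow the paper: contract the edges lying in at least three elements, apply Lemma~\ref{lem61}, and invoke MacLane. You add one simplification. In the cycle-space form of MacLane's criterion the basis members may be arbitrary Eulerian edge sets, so a linearly independent subfamily already suffices. The paper instead runs a three-case procedure to reach a basis of genuine cycles. If one insists on simple cycles, first splitting each member into edge-disjoint cycles also preserves the at-most-two property. The multigraph point you flag is settled by the standard deletion step. If an edge lies in two basis members, replace them by their sum; if it lies in one, drop that member. This gives such a basis for the graph with that edge deleted, so you can pass to the underlying simple graph and finish with Lemma~\ref{lem62}; the paper does not address this point.

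The genuinely different part is (2)$\imp$(1) and (I). The paper builds $\mc{C}$ globally. It takes spanning trees $T_i\subseteq P_k(G)$ of the components of $H=(V,P_k(G))$, fundamental cycles for all other edges inside components, and face cycles of the loop-free graph $G_p$ lifted by walking along $E_T$. You instead undo the contractions one at a time. You toggle $e_j$ into each element according to a single parity bit, and add $\{e_j\}$ when $e_j$ has become a loop. Because memberships of edges other than $e_j$ never change, the frequency count and the spanning argument are nearly immediate. Cycles in $P$ and edges that turn into loops are handled uniformly, with no spanning forests. When $P$ is a forest, your lifting agrees with the paper's walk read mod $2$: both send $C$ to $C$ plus the unique subset of the forest with boundary $\partial_1(C)$. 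What the paper gains is an explicit split into $\mc{C}_1^*$ and $\mc{C}_2^*$, which is exactly the structure it reuses in the proof of Theorem~\ref{thm11}.

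For your time bound, keep the parity bits at the at most $2k$ original endpoints of edges of $P$, not at merged classes, since those classes split as you go backward. The parity at a class of $G_{j-1}$ is then the sum mod $2$ over its vertices, which gives your $O(k+1)$ cost per element per step.
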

	\begin{proof}
		(1)$\imp$(2): We set $P_k(G)$ to be the set of edges that appear in at least three elements in $C$, and consider $G/ P_k(G)$. By repeatedly applying Lemma \ref{lem61}, we get that $G/ P_k(G)$ has a Eulerian-spanning set $\mc{C}_1$ such that each edge in $G/ P_k(G)$ is contained in at most two elements in $\mc{C}_1$, because edges in $P_k(G)$ are no longer in $G/ P_k(G)$. We then transform $\mc{C}_1$ to a cycle basis such that each edge in $G/ P_k(G)$ is contained in at most two cycles in $\mc{C}_1$, and by the Mac Lane Planarity Criterion conclude the proof. \par
		Suppose there is a n element $C_i\in \mc{C}_1$ such that $C_i$ is not a cycle. Then $C_i$ is the disjoint union of cycles. Take any cycle $C_{i1}\in C_i$. We distinguish between three cases: \par
		(i) If $C_{i1}$ cannot be represented by the symmetric difference of elements in $\mc{C}_1\rs\{C_i\}$, we replace $C_i$ by $C_{i1}$ in $\mc{C}_1$. \par
		(ii) If (i) doesn't hold, and $C_i\rs C_{i1}$ cannot be represented by the symmetric difference of elements in $\mc{C}_1\rs\{C_i\}$, we replace $C_i$ by $C_i\rs C_{i1}$ in $\mc{C}_1$. \par
		(iii) If neither (i) nor (ii) holds, we remove $C_i$ from $\mc{C}_1$. \par
		By basic linear algebra, after the modification, $\mc{C}_1$ will still be a Eulerian-spanning set of $G/ P_k(G)$. Moreover, after the modification, the total size of elements in $\mc{C}_1$ decreases, and the number of appearance of any edge in $\mc{C}_1$ cannot increase. If we repeat the above procedure until we cannot do it anymore, we will finally get a Eulerian-spanning set consisting of cycles, such that any edge cannot appear in more than two cycles in the basis. After picking a maximum independent subset, we get the cycle basis we want. \par
		Moreover, since we can find $P_k(G)$ by simply counting the number of appearance of any edge in $\mc{C}$, which can be done in time $O(m|\mc{C}|)$, we prove (II). \par
		(2)$\imp$(1): We construct a Eulerian-spanning set $\mc{C}$ for $G$ such that the number of edges that appear in at least three elements in $\mc{C}$ is at most $k$. $\mc{C}$ consists of two parts: \par
		(i) Partition $V$ into subsets corresponding to the connected components in $H=(V,P_k(G))$, and let them be $V_1^*,\cdots,V_t^*\subseteq V$. For any $i=1,2,\cdots,t$, let $P_i\triangleq G[V_i^*]=(V_i^*,E_i^*)$. Take a spanning tree $T_i$ of $P_i$ such that $T_i\subseteq P_k(G)$. Let $E_T=\cup_{i=1}^tT_i$, then $E_T\subseteq P_k(G)$. For any edge $e\in E_i^*\rs T_i$, take the unique cycle $C_e$ in $T_i\cup e$. Let the set of all cycles taken in this way be $\mc{C}_1^*$. \par
		Now we consider the computational complexity of this part when $G$ is simple. Constructing spanning trees can be done in time $O(nk)$ by a simple DFS search. Moreover, the number of all $C_e$ cannot exceed the number of possible edges in all connected components, which is $O((k+1)^2)$. Because each cycle is of length at most $n$, recording each cycle $C_e$ and adding them to $\mc{C}_1^*$ takes time $O(n(k+1)^2)$. Therefore, this part can be done in time $O(n(k+1)^2)$.\par
		(ii) Let the graph obtained by removing all self-loops in $G/ P_k(G)$ be $G_p$. Since $G/ P_k(G)$ is planar, $G_p$ is also planar. Let the graph obtained by merging all multiple edges in $G_p$ be $G_p'$, then $G_p'$ is a simple planar graph. We first compute a planar embedding of $G_p'$. Then we compute a planar embedding of $G_p$ by recovering multiple edges.\par
		The boundaries of all faces in $G_p$ form a Eulerian-spanning set for $G_p$, which we will denote by $\mc{C}_p$. Then every edge in $G_p$ appears in no more than two elements in $\mc{C}_p$. Let $\mc{C}_p=\{C_1,\cdots,C_l\}$. \par
		Take an $i\in\{1,2,\cdots,l\}$, then $C_i$ is the disjoint union of cycles. Let those cycles be $C_i^{(1)},C_i^{(2)},\cdots,C_i^{(o)}$. $\forall j=1,2,\cdots,o$, let $C_i^{(j)}=(e_{1},e_{2},\cdots,e_{h})$ with edges in cyclic order. We note that each vertex in $G_p$ corresponds to a connected component in $H$. $\forall x=1,2,\cdots,h$, $e_x$ corresponds to an edge in $G$, which we will denote by $e_{x*}$. Moreover, the two endpoints of $e_{x*}$ are in different connected components of $H$. We construct a cycle $C_i^{(j)*}$ in $G$ as follows: \par 
		Take an endpoint $u$ shared by $e_{1},e_{2}$, and let $v$ be the other endpoint of $e_{1}$. Let $v^*$ be the endpoint of $e_{1*}$ corresponding to $v$.\par
		Start from $v^*$.\par
		Walk along $e_{1*}$ to the other endpoint.\par 
		Walk along $E_T$ until we find an endpoint of $e_{2*}$. \par
		Walk along $e_{2*}$ to the other endpoint. \par
		... \par
		Walk along $e_{x*}$ to the other endpoint.\par 
		Walk along $E_T$ until we find an endpoint of $e_{(x+1)*}$. \par
		Walk along $e_{(x+1)*}$ to the other endpoint. \par
		... \par
		Walk along $e_{h*}$ to the other endpoint.\par 
		Walk along $E_T$ until we find an endpoint of $e_{1*}$, which must be $v^*$. \par
		Let $C_i^{(j)*}$ be the cycle we get after the above procedure. Let $C_i^*$ be the symmetric difference between $C_i^{(1)*},C_i^{(2)*},\cdots,C_i^{(o)*}$. Let all the $C_i^*$ constructed in this fashion form a set $\mc{C}_2^*$. \par
		Let $\mc{C}=\mc{C}_1^*\cup \mc{C}_2^*$. \par
		We first show that only edges in $E_T$ can appear in more than two elements of $\mc{C}$. For any edge $e\notin E_T$, if the endpoints of $e$ are in the same connected component of $H$, then $e$ can only appear in elements of $\mc{C}_1^*$. But by the construction, $e$ actually appears in exactly one element, namely $C_e$, of $\mc{C}_1^*$. If the endpoints of $e$ are in different connected components of $H$, then it has a unique corresponding edge in $G_p$. Since each edge in $G_p$ appears in at most two elements in $\mc{C}_p$, we have that $e$ can appear in at most two elements of $\mc{C}_2^*$. But $e$ cannot possibly appear in elements of $\mc{C}_1^*$. Therefore $e$ appears in at most two elements of $\mc{C}$. \par
		Now we will show that $\mc{C}$ is indeed a Eulerian-spanning set. Let $C^*\in C(G)$, and $E^*\subseteq C^*$ be the set of edges that cross between different connected components of $H$. Let the edges corresponding to $E^*$ in $G_p$ be $E_C$. Then we have $E_C\in C(G_p)$. Therefore $E_C$ is the symmetric difference of elements $C_1,C_2,\cdots,C_s\in\mc{C}_p$. Consider $C_{res}^*\triangleq C^*+\sum_{m=1}^sC_m^*$. Obviously $C_{res}\in C(G)$. Moreover, $C_{res}$ does not include any edges crossing between different connected components of $H$. Therefore $C_{res}$ is the disjoint union of edge sets $C_{res}^{(1)},\cdots,C_{res}^{(t)}$, where $\forall i=1,2,\cdots,t$, edges in $C_{res}^{(i)}$ have endpoints only in $V_i^{*}$. Therefore $C_{res}^{(i)}\in C(G[V_i^{*}])$, which indicates that $C_{res}^{(i)}$ is the symmetric difference between some elements in $\mc{C}_1^*$. Then $C_{res}$ must be the symmetric difference between some elements in $\mc{C}_1^*$, which indicates that $C^*$ is the symmetric difference of elements in $\mc{C}$. \par
		Now we consider the computational complexity of this part when $G$ is simple. Computing a planar embedding of $G_p'$ can be done in time $O(n)$ by applying the algorithm in \cite{109}. Recovering multiple edges to get planar embedding of $G_p$ takes time $O(m)$, because the total number of edges in $G_p$ cannot exceed that of $G$. Sweeping and recording all faces of $G_p$ takes time at most $O(m)$, because every edge in $G_p$ is on the boundary of at most 2 faces. Because each edge in $G_p$ can appear in at most two elements in $\mc{C}_p$, the total size of elements in $\mc{C}_p$ cannot exceed $2m$. In the construction of $\mc{C}_2^*$, we decomposed elements in $\mc{C}_p$ into disjoint cycles, i.e. $C_i^{(j)}$. Each edge in $C_p$ can still appear in at most two of $C_i^{(j)}$, indicating that edges not in $E_T$ can appear in at most two of $C_i^{(j)*}$. Moreover, the total number of these cycles cannot exceed $2m$. Therefore, the total length of all $C_i^{(j)}$ is at most $2mk+2m$, indicating that computing them costs time $O(km+m)$. Computing all the $C_i^*$ and adding them to $\mc{C}_2^*$ takes at most the same amount of time. Therefore, computing $\mc{C}_2^*$ costs time $O(km+m)$. \par
		Finally, the total running time is the sum of that in (i) and (ii), which is $O((k+1)^2\max\{m,n\})$. The sizes of $\mc{C}_1^*$ and $\mc{C}_2^*$ are both $O(m)$, therefore the size of $\mc{C}$ is $O(m)$. Therefore we prove (I).
	\end{proof}
	\begin{lemma}\label{lem62}
		Suppose $G=(V,E)$ is a graph, and $G_s$ is the underlying simple graph of $G$. Then $G$ admits a planar embedding if and only if $G_s$ does.
	\end{lemma}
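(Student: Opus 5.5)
One direction is immediate. $G_s$ is obtained from $G$ by deleting all self-loops and all but one copy of each parallel class. Hence $G_s$ is a subgraph of $G$, and restricting a planar embedding of $G$ to $G_s$ gives a planar embedding of $G_s$.

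The converse is constructive, adding edges back one at a time to a fixed planar embedding of $G_s$.

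\emph{Parallel edges.} Take an edge $e=(u,v)$ of $G_s$, drawn as a simple arc $\gamma$. Since the embedding is a finite arrangement of arcs, a sufficiently thin closed neighbourhood $N$ of $\gamma$ meets no other edge and contains no other vertex. It touches the rest of the drawing only near $u$ and $v$, inside small discs around them, where the other edges at $u$ (resp.\ $v$) leave in distinct directions. Inside $N$ we draw as many pairwise internally disjoint arcs from $u$ to $v$ as needed, each parallel to $\gamma$. These serve as the extra copies of $e$, and they create no crossings. To keep this easy to justify, I would first assume the embedding is piecewise linear (polygonal), which can always be arranged. Then $N$ can be taken as a union of thin quadrilaterals around the segments of $\gamma$.

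\emph{Self-loops.} At each vertex $v$, take a small disc $D$ around $v$ that meets only the initial segments of edges incident to $v$. These segments split $D$ into finitely many sectors. A self-loop is drawn as a small closed curve through $v$ lying inside the interior of one sector. Several loops at $v$ are drawn as nested curves in the same sector, all touching only at $v$. Each new loop avoids every other edge and every vertex except $v$.

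The only step needing care is showing that these neighbourhoods exist, i.e.\ that a thin strip around an arc and a small disc around a vertex meet nothing else. This follows from compactness: a compact arc and the closed set formed by the remaining edges minus small discs at the endpoints are disjoint, so they are at positive distance. With a polygonal embedding this is elementary.

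Combining the two steps gives a planar embedding of $G$, which completes the equivalence.
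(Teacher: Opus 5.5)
Your proof is correct and follows the same approach as the paper. The forward direction holds because $G_s$ is a subgraph of $G$. For the converse, the paper simply asserts that a planar drawing of $G_s$ always leaves room for parallel edges and self-loops, and your thin-strip and small-sector constructions, with the compactness and polygonal-embedding justification, make that assertion rigorous.
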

	\begin{proof}
		It is obvious that if $G$ admits a planar embedding, then $G_s$ does. Now suppose $G_s$ admits a planar embedding. After we draw $G_s$ on the plane, there will always be room for parallel edges and self-loops. Therefore $G$ admits a planar embedding.
	\end{proof}
	\begin{proof}[\it{proof of Theorem \ref{thm11}}]
		The basic ideas are still the same as in the proof of Theorem \ref{thm52}, but we will design a different algorithm that exploits the planarity of $G/P_k(G)$ when solving the set symmetric difference problem. \par 
		$G/ P_k(G)$ is planar. By Theorem \ref{thm61}, we can construct a Eulerian-spanning set $\mc{C}$ in time $O((k+1)^2\max\{m,n\})$, such that $|\mc{C}|=O(m)$, and the number of edges that appear in at least three elements in $\mc{C}$ is at most $k$. Fix a planar embedding of $G/ P_k(G)$. We inherit the following notations from the proof of Theorem \ref{thm61}: \par 
		Subgraph $H=(V,P_k(G))$. Vertex sets $V_1^*,V_2^*,\cdots,V_t^*$. Graph $P_i\triangleq G[V_i^*]=(V_i^*,E_i^*)$. Spanning tree $T_i$ of $P_i$ such that $T_i\subseteq P_k(G)$. Edge set $E_T=\cup_{i=1}^tT_i$. The cycle $C_e\subseteq T_i\cup e$ for any edge $e\in E_i^*\rs T_i$. Set $\mc{C}_1^*$. Graph $G_p$ with a planar embedding computed and fixed as in the proof of Theorem \ref{thm61}. Eulerian-spanning set $\mc{C}_p=\{C_1,\cdots,C_l\}$ of $G_p$. Constructed edge set $C_i^*\in C(G)$. Set $\mc{C}_2^*$. \par
		All inherited notations carry the meaning exactly the same as in the proof of Theorem \ref{thm61}. \par
		Now we proceed. By Theorem \ref{thm33}, we only need to find an edge set $E_2$ with minimum total weight such that $\delta_1(E_2)=\mc{C}_o$, where $\mc{C}_o$ is the set of odd elements in $\mc{C}$. Let 
		$E_c^*=\{e\in E\rs E_T:\text{the two endpoints of $e$ are in the same connected component of }H\}$,
		$E_p^*=\{e\in E:\text{the two endpoints of $e$ are in different connected components of }H\}$.
		Then $E$ is the disjoint union of $E_T,E_c^*,E_p^*$, and there is a one-to-one correspondence between edges of $E_p^*$ and edges of $G_p$. Our algorithm is similar to that in the proof of Theorem \ref{thm52}. For each $E_3\subseteq E_T$, we find $E_{c}'\subseteq E_c^*,E_{p}'\subseteq E_p^*$ such that $\delta_1(E_3\cup E_c'\cup E_p')=\mc{C}_o$, and $E_c'\cup E_p'$ is of minimum total weight, or prove that there is no such $E_c',E_p'$. \par
		Let $\mc{C}_o'=\mc{C}_o\triangle(\delta_1(E_3)),\mc{C}_i^o=\mc{C}_o'\cap\mc{C}_i^*,i=1,2$. Since $\forall e\in E_c^*,\delta_1(\{e\})\subseteq\mc{C}_1^*,\forall e\in E_p^*,\delta_1(\{e\})\subseteq\mc{C}_2^*,\mc{C}_1^*\cap\mc{C}_2^*=\emptyset$, the problem reduces to finding $E_{c}'\subseteq E_c^*,E_{p}'\subseteq E_p^*$ such that $\delta_1(E_c')=\mc{C}_o^1,\delta_1(E_p')=\mc{C}_o^2$, and $E_c',E_p'$ are of minimum total weight, respectively. But $\forall e\in E_c^*,|\delta_1(\{e\})|=\{C_e\}$, and the cycles $C_e$ are pairwise different. Therefore $E_c'$ is simply the set of edges corresponding to elements in $\mc{C}_1^o$, and can be found in time $O(m)$. \par
		The only problem that remains is to find $E_p'\subseteq E_p^*$ of minimum total weight such that $\delta_1(E_p')=\mc{C}_o^2$, or claim that no such $E_p'$ exists. Let $G_p=(V_p,E_p)$. Recall that $\mc{C}_p$ is the set of boundaries of all faces of $G_p$, and thus is a Eulerian-spanning set of $G_p$. Recall that there is a one-to-one correspondence between edges in $E_p^*$ and edges in $E_p$, and a one-to-one correspondence between elements in $\mc{C}_2^*$ and $\mc{C}_p$. Let $f_1:E_p^*\to E_p,g_1:\mc{C}_2^*\to\mc{C}_p$ represent the correspondences. \par
		Let the dual graph of $G_p$ (recall that its planar embedding is already fixed) be $G_{d}=(V_d,E_d)$. Then elements in $V_d,E_d$ are in one-to-one correspondence with elements in $\mc{C}_p,E_p$, respectively. Let $f_2:E_p\to E_d,g_2:\mc{C}_p\to V_d$ represent the correspondences. Let $\delta_1'$ be the operator defined on $G_p$ with respect to the Eulerian-spanning set $\mc{C}_p$. Let $\partial_1''$ be the operator defined on $G_d$ (note that we do not need a Eulerian-spanning set to define this operator). Then we have $\forall E_p'\subseteq E_p^*,\partial_1''(f_2(f_1(E_p')))=g_2(\delta_1'(f_1(E_p')))=g_2(g_1(\delta_1(E_p')))$. Therefore, if we let $V_t=g_2(g_1(\mc{C}_2^o))$, and weight each edge $e\in E_d$ by the weight of $f_1^{-1}(f_2^{-1}(e))$, it suffices to find an edge set $E_d'\subseteq E_d$ with minimum total weight such that $\partial_1''(E_d')=V_t$, or claim that no such $E_d'$ exists. First, we note that self-loops in $E_d$ can all be removed since the image of a self-loop under $\partial_1''$ is zero. Therefore we assume that $G_d$ does not contain self-loops. Because the images of multiple edges with the same endpoints under $\partial_1''$ are identical, and all weights are non-negative, we can assume that at most one of these multiple edges are in $E_d'$. Therefore, we can further merge these multiple edges into a single one, and weight the single edge by the minimum weight among the multiple edges it comes from. Therefore, we assume that $G_d$ is a simple, connected planar graph from now on.\par
		Now we describe how to find $E_d'$ in time $O(\max\{m,n\}^{3/2}\log(\max\{m,n\}))$, or claim that no such $E_d'$ exists. If $2\nmid|V_t|$, then there can be no such $E_d'$ because any graph must have an even number of odd-degree vertices. Otherwise, we can convert the problem to a maximum weighted Eulerian graph problem, and then convert it to a maximum weighted perfect matching problem with a procedure almost identical to that in \cite{113}. We defer the conversion and the algorithm to Appendix \ref{a2} to avoid redundancy. We only need the conclusion that the procedure of converting the problem, solving it, and recovering the original answer costs time $O(\max\{m,n\}^{3/2}\log(\max\{m,n\}))$. \par
		Now we analyze the computational complexity of the above whole procedure. Computing the Eulerian-spanning set $\mc{C}$ according to Theorem \ref{thm61} costs time $O((k+1)^2\max\{m,n\})$. After we fixed $E_3\subseteq E_T$, converting the remaining problem, solving it, and recovering the original answer costs time $O(\max\{m,n\}^{3/2}\log(\max\{m,n\}))$. Therefore, the total computational complexity is $O(2^k\max\{m,n\}^{3/2}\log\max\{m,n\}+(k+1)^2\max\{m,n\})=O(2^k\max\{m,n\}^{3/2}\log\max\{m,n\})$.
	\end{proof}
	\begin{corollary}\label{col62}
		Suppose $G=(V,E)$ is a simple graph, $w:E\to \R^+\cup\{0\}$ is a weight function. Moreover, $E_1$ is another edge set included in the input such that $|E_1|=k$, and $G_{aug}\triangleq(V,E\cup E_1)$ is $k$-contraction apex with $P_k(G_{aug})=E_1$. Let $n=|V|,m=|E|$. Then there is an $O(2^k(\max\{m,n\}+k)^{3/2}\log(\max\{m,n\}+k))$-time algorithm finding the maximum weighted cut of $G$.
	\end{corollary}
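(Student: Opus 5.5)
The plan is to reduce to Theorem \ref{thm11} applied to the augmented graph, giving the edges of $E_1$ weight zero. The extra edges of $E_1$ then cannot change the optimal cut value. A naive version of this needs care, since $E\cup E_1$ may not be simple: an edge of $E_1$ may coincide with an edge of $E$, or there may be repeated edges. So I would first subdivide each edge of $E_1$ that is parallel to an existing edge. This does not work well: subdivision changes contractibility, so the parity structure of odd circuits breaks. Adding an edge of weight zero, however, changes nothing about cuts. Every cut of $(V,E\cup E_1)$ restricts to a cut of $G$ with the same weight, and conversely every vertex bipartition gives cuts of both graphs with equal weight. So the maximum weighted cut of $G_{aug}$ with $w(e)=0$ for $e\in E_1\rs E$ is exactly the maximum weighted cut of $G$, read off from the same vertex partition.

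The main obstacle is simplicity. When an edge of $E_1$ is parallel to an edge of $E$, I would not add it at all, and instead put the existing copy in $E$ into the contraction set. The resulting graph $G'=(V,E\cup E_1)$, with parallel copies identified, is simple. Contracting the identified set $P'$, of size at most $k$, gives the same graph as $G_{aug}/E_1$ up to parallel edges and self-loops. By Lemma \ref{lem62} it is therefore planar, and $G'$ is $k$-contraction apex with $P_k(G')=P'$. The weight function on $G'$ keeps $w$ on $E$ and assigns $0$ to the new edges; it is non-negative, as Theorem \ref{thm11} requires. Alternatively, one can check that the proof of Theorem \ref{thm11} only needs simplicity of $G_p'$, and that the whole pipeline (Theorems \ref{thm31}–\ref{thm33}) tolerates multigraphs once parallel edges are merged by minimum weight. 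The identification route seems cleaner, so I would use it.

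Now apply Theorem \ref{thm11} to $G'$ with $P_k(G')$ given. The graph $G'$ has $n$ vertices and at most $m+k$ edges, so the running time is $O(2^k\max\{m+k,n\}^{3/2}\log\max\{m+k,n\})$. This is $O(2^k(\max\{m,n\}+k)^{3/2}\log(\max\{m,n\}+k))$. The construction of $G'$ and $P'$ takes $O(m+k)$ time using hashing or sorting of edge pairs, which is dominated. Finally, output the vertex bipartition returned for $G'$ and take the induced cut of $G$; by the weight argument above, this is a maximum weighted cut of $G$.
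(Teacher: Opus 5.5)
Your proposal is correct and follows the paper's proof: you add the edges of $E_1$ not already in $E$ with weight zero (your identification of parallel copies is exactly the paper's ``except for those already in $G$''), apply Theorem~\ref{thm11} to the resulting simple graph on $n$ vertices and at most $m+k$ edges, and observe that zero-weight edges leave every cut's weight unchanged. The discarded subdivision idea and the multigraph aside are unnecessary, but they do no harm to the argument.
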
 \par
	\begin{proof}
		We add the edges $E_1$ to $G$ except for those already in $G$, and weight the new edges by zero, getting the weighted graph $G_{aug}$ mentioned in the theorem. By Theorem \ref{thm11}, we can solve the weighted max-cut problem in $G_{aug}$ within time $O(2^k(\max\{m,n\}+k)^{3/2}\log(\max\{m,n\}+k))$. Since the maximum weighted cut in $G_{aug}$ is equivalent to the maximum weighted cut in $G$, we have our result.
	\end{proof} \par

	\section{Application to graphs with crossing number $k$}\label{s7}
	We first present the definition of $s$-planar graphs \cite{111} and crossing number of a given embedding.
	\begin{definition}
		A graph $G=(V,E)$ is $s$-planar if it can be embedded in the plane such that each edge is crossed at most $s$ times. The crossing number of an embedding is obtained in the following way: count the crossings separately for each pair of edges, and for each crossing point of the pair of edges. \par
		We say a $G$ has a $s$-planar embedding of crossing number $k$, if $G$ can be embedded in the plane such that each edge is crossed at most $s$ times, and the crossing number of this embedding is $k$. \par
		We define the crossing number of $G$ to be the minimum crossing number among all embeddings of $G$ in the plane.
	\end{definition}
	\begin{remark}
		Given an embedding of $G$ on the plane, if there are at least three edges that intersect in one point, we can adjust the edges a little in a local way such that each pair of edges intersect in a different point, without changing the total number of crossings. Therefore, we will always assume that $G$ is drawn on the plane so that situations in which three edges intersect in one point do not occur.
	\end{remark}
	\begin{lemma}\label{lem71}
		Suppose $G=(V,E)$ is a simple graph that has a 2-planar embedding of crossing number $k$. Then there is another edge set $E_1$ such that $|E_1|\leq k$, and if we let $G_{aug}=(V,E\cup E_1)$, then the underlying simple graph of $G_{aug}/E_1$, which we denote by $G_p$, is planar. Moreover, we can compute $E_1$ and the planar embedding of $G_p$ in time $O(km)$.
	\end{lemma}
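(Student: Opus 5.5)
Our plan is to handle each crossing with one new edge that, once contracted, merges the two crossing edges into a planar vertex. Fix the given 2-planar embedding of crossing number $k$. By the remark above, no three edges pass through a common point. So each of the $k$ crossing points is an intersection of exactly two edges $e=(a,b)$ and $f=(c,d)$, with four distinct endpoints: $G$ is simple, and two edges sharing an endpoint can be assumed not to cross (standard uncrossing), or else we simply pick endpoints appropriately.

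For each crossing point $x$ we add an edge $g_x=(a,c)$ joining one endpoint of $e$ to one endpoint of $f$. We draw $g_x$ along the segment of $e$ from $a$ to $x$ and then along $f$ from $x$ to $c$. Let $E_1$ be the set of all such $g_x$. Then $|E_1|\le k$; some $g_x$ may coincide with existing edges, which only lowers the count.

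To see planarity, think of contracting $g_x$ geometrically: pull $a$ along $e$ to $x$ and $c$ along $f$ to $x$, so that $x$ becomes a genuine vertex. After this, the crossing of $e$ and $f$ at $x$ is replaced by a common endpoint. Equivalently, $G_{aug}/E_1$ is obtained from the planarization of $G$ by contracting, for each crossing point, the half-edges $a$–$x$ and $c$–$x$. A contraction of a planar graph is planar, and the planarization is planar by construction. Hence $G_{aug}/E_1$ is planar, and by Lemma \ref{lem62} so is its underlying simple graph $G_p$.

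The main obstacle is the 2-planar condition. One edge $e$ may carry two crossings, and the two chosen contractions might both pull the same endpoint $a$ along $e$ to different points, which is not a planar operation. We resolve this by choosing, for an edge crossed twice at $x_1,x_2$ (in order from $a$ to $b$), endpoint $a$ for $x_1$ and endpoint $b$ for $x_2$. The pieces $[a,x_1]$ and $[x_2,b]$ of $e$ are then disjoint, and contracting them keeps the middle piece $[x_1,x_2]$ as an edge. Because each edge has at most two crossings, such an assignment always exists: each crossing point picks, on each of its two edges, the endpoint on its side, and no segment is used twice. With this choice, $G_{aug}/E_1$ is exactly a contraction of the planarization along a forest of disjoint half-edge segments. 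Note that $x$ is identified with $a$, and $a$ cannot be identified again via another crossing through the same segment.

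For complexity, building the planarization and computing $E_1$ takes $O(k+m)$ time given the embedding. The contraction and the recovery of the embedding of $G_p$ from the contracted planarization take $O(km)$ time naively, with each of the $k$ contractions updating $O(m)$ adjacency data. Merging parallel edges and dropping loops also fits within $O(km)$.
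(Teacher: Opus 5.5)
Your argument is correct, but it is organized differently from the paper's. The paper handles one crossing per round, for at most $k$ rounds. It maintains a simple graph with a 2-planar drawing whose crossing number strictly decreases. If the two crossing edges meet twice or share an endpoint, it simply redraws them and adds nothing to $E_1$. Otherwise it uses 2-planarity to find crossing-free half-segments $xv_1,xv_3$, contracts the existing or newly added edge $(v_1,v_3)$, and slides $v_1,v_3$ to $x$. You instead treat all crossings at once. You use 2-planarity in the same way: each crossing has a crossing-free end-piece on each of its two edges, with opposite ends taken on an edge crossed twice. This lets you contract a set $F$ of planarization edges in which each original edge keeps exactly one uncontracted piece, and that piece joins the classes of its two endpoints. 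Hence $G_{aug}/E_1$ agrees with the planarization contracted along $F$ up to loops and parallel edges, and planarity follows because it is preserved under contraction.

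Your route gives a uniform argument with no case analysis or redrawing pictures. Your $E_1$ is defined directly on $V$, whereas in the paper edges chosen in later rounds join already-merged vertices and must implicitly be lifted back to $G$. You also obtain the embedding of $G_p$ simply by contracting a rotation system. The paper's route can give a smaller $E_1$: crossings between adjacent edges, or between two edges crossing twice, are removed by redrawing at no cost. Your construction may charge an edge to each such crossing, and $|E_1|$ enters the later running time as $2^{|E_1|}$.

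Two small corrections. First, the contracted pieces need not form a forest: two crossings $x_1,x_2$ may both select the same pair $a,c$, giving the cycle $a x_1 c x_2$. This is harmless, since contracting any edge set of a plane graph and deleting loops leaves a plane graph. Second, ``exactly'' should read ``up to loops'', because an edge of $E\cap E_1$ survives in the contracted planarization as a loop. The four-distinct-endpoints assumption is also unnecessary: if the two ends chosen at $x$ coincide, $g_x$ is a loop and can simply be dropped.
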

	\begin{proof}
		We construct a procedure consisting of at most $k$ rounds. In each round, we do one of the following three things:\par
		i. Re-embed (or redraw) $G$ on the plane. \par
		ii. Contract an edge in $G$ while merging multiple edges and removing self-loops. \par
		iii. Pick an edge not in $G$, add it to $G$, and contract it while merging multiple edges and removing self-loops. \par
		We will show that after each round, $G$ is a simple graph that has a 2-planar embedding of smaller crossing number. Therefore, after at most $k$ rounds, we will get a planar graph, and the edges we have contracted during the procedure certainly form the set $E_1$ we wanted. \par
		Now suppose $G=(V,E)$ is a simple graph with a 2-planar embedding of crossing number $r$. Pick any intersection point $x$, and suppose it is the intersection of $e_1$ and $e_2$. We distinguish between three cases:\par
		(1) If $e_1$ and $e_2$ intersect at least twice, then by 2-planarity of the embedding, they must intersect exactly twice, and cannot possibly intersect any other edge. Therefore, $e_1,e_2$ can be redrawn in the way depicted in Figure \ref{f1}, such that $e_1$ and $e_2$ do not cross each other, and do not cross any other edges. Therefore the redrawing is indeed 2-planar, and the crossing number is reduced.\par
		(2) If $e_1$ and $e_2$ intersect exactly once at $x$, and share an endpoint, say, $v_1$. Let $e_1=(v_1,v_2),e_2=(v_1,v_3)$. By 2-planarity, the segments $v_1x(\text{on }e_1)$, $v_1x(\text{on }e_2)$, $xv_2,xv_3$ can intersect with other edges at most once each. Therefore, if we redraw the edges $e_1,e_2$ in the way depicted in Figure \ref{f1}, then they do not intersect each other, and can only intersect at most twice with other edges each. Therefore the redrawing is indeed 2-planar. Moreover, the crossing number is also reduced after the redrawing.
		\begin{figure}[h]
			\centering 
			\includegraphics[scale=.2]{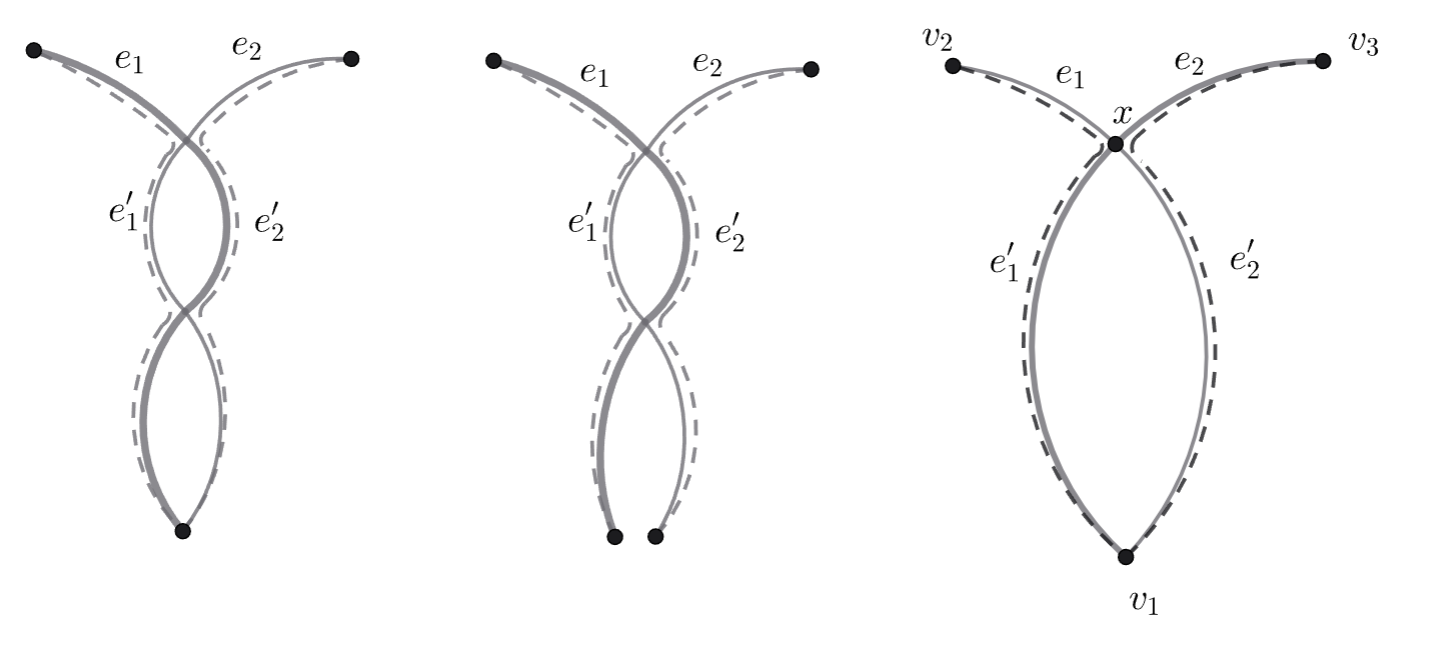}
			\caption{the case of two edges intersecting two times (figures in the left and middle, we distinguish cases whether they share an endpoint or not), and the case where two edges share an endpoint, but intersect only once (figure in the right). The original edges are those in solid lines, while the redrawn ones are those in dotted lines.}
			\label{f1}
		\end{figure} \par
		(3) If the endpoints of $e_1$ and $e_2$ are pairwise different, and they intersect only once on $x$. Let $e_1=(v_1,v_2),e_2=(v_3,v_4)$. By 2-planarity of the embedding, the segments $xv_1,xv_2$ on $e_1$ can only be crossed once by other edges altogether. Therefore there must be a segment, say $xv_1$, that is not crossed by any other edge. Similarly, let $xv_3$ be the segment on $e_2$ that is not crossed by any other edge. Let $e=(v_1,v_3)$. We further distinguish between two cases: \par
		(i) If $e\in E$, we first contract $e$ in $G$ while preserving multiple edges and self-loops, and redraw the edges $e_1,e_2$ and all edges incident to $v_1,v_3$ in the way depicted in Figure \ref{f2}. Specifically, we remove the edge $e$, move the vertices $v_1,v_3$ to $x$ along the segments $v_1x,v_3x$, and extend the edges incident to $v_1,v_3$ along the corresponding segments. Then we adjust the segments $v_1x,v_3x$ on edges incident to $v_1,v_3$ in a local way so that they won't cross each other, and won't cross other edges as well.\par
		Then an intersection point in the redrawing must also be an intersection point of $G$. Moreover, $x$ is no longer an intersection point of the redrawing. Therefore the redrawing is 2-planar, and of smaller crossing number. Then we merge multiple edges and remove self-loops, which cannot affect 2-planarity or increase the crossing number.\par
		(ii) If $e\notin E$, we add the edge $e$ to $G$ (we note that how we draw the edge $e$ on the plane is irrelevant, because we will contract it soon). Then we contract $e$ while preserving multiple edges and self-loops, and redraw the edges $e_1,e_2$ and all edges incident to $v_1,v_3$ in the way depicted in Figure \ref{f3}. Specifically, we remove the edge $e$, move the vertices $v_1,v_3$ to $x$ along the segments $v_1x,v_3x$, and extend the edges incident to $v_1,v_3$ along the corresponding segments. Then we adjust the segments $v_1x,v_3x$ on edges incident to $v_1,v_3$ in a local way so that they won't cross each other, and won't cross other edges as well.\par
		Then an intersection point in the redrawing must also be an intersection point of $G$. Moreover, $x$ is no longer an intersection point of the redrawing. Therefore the redrawing is indeed 2-planar, and of smaller crossing number. Then we merge multiple edges and remove self-loops, which cannot affect 2-planarity or increase the crossing number.\par
		\begin{figure}[h]
			\centering 
			\includegraphics[scale=.25]{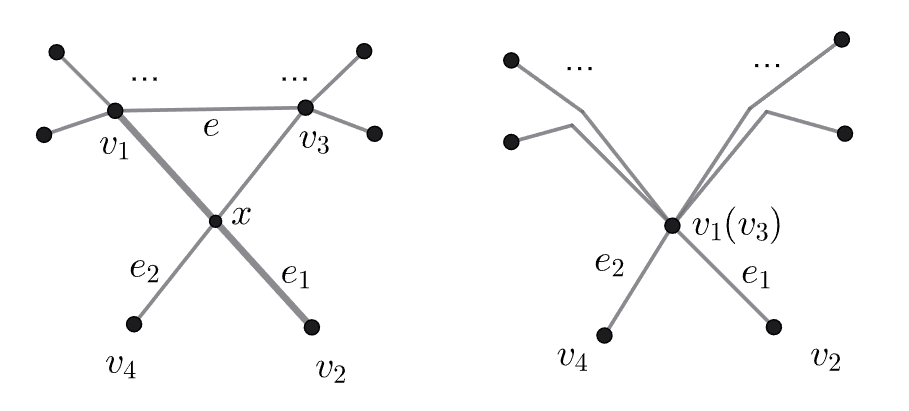}
			\caption{The case where the endpoints of $e_1$ and $e_2$ are pairwise different, and they intersect only once on $x$. If $e=(v_1,v_3)\in E$, we contract $e$ to the intersection point $x$, and redraw all edges incident to $v_1,v_3$ as depicted above (we have not merged multiple edges or removed self-loops so far).}
			\label{f2}
		\end{figure} \par
		\begin{figure}[h]
			\centering 
			\includegraphics[scale=.23]{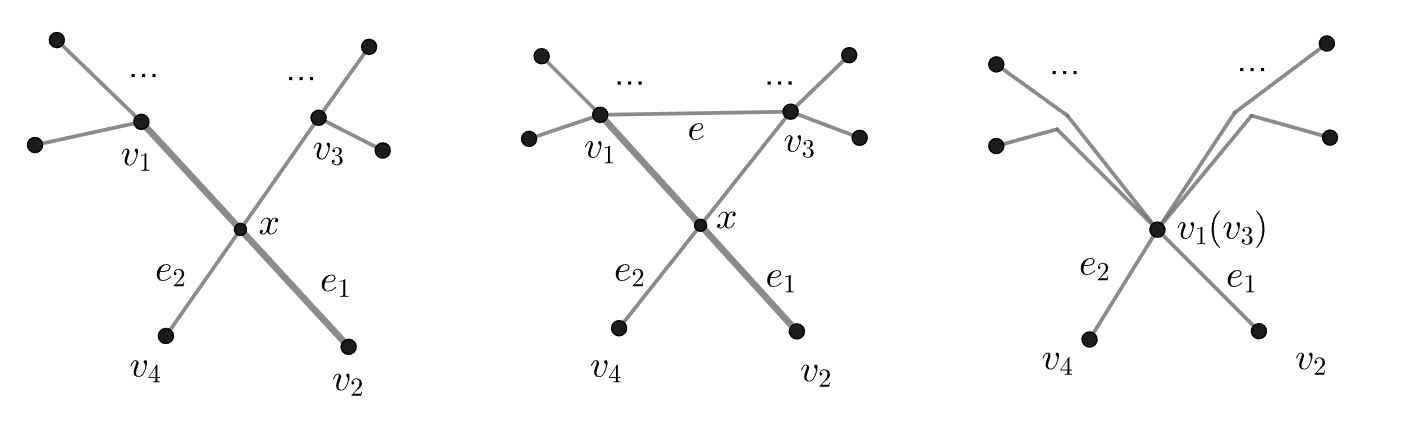}
			\caption{The case where the endpoints of $e_1$ and $e_2$ are pairwise different, and they intersect only once on $x$. If $e=(v_1,v_3)\notin E$, we first add $e$ to $G$, then contract $e$ to the intersection point $x$, and redraw all edges incident to $v_1,v_3$ as depicted above (we have not merged multiple edges or removed self-loops so far).}
			\label{f3}
		\end{figure} \par
		Now we consider the computational complexity of finding $E_1$ and planar embedding of $G_p$. Each round of our procedure can be done in time $O(m)$, since there are at most $m$ edges to be redrawn. Therefore the total running time of the procedure is $O(km)$.
	\end{proof}
	\begin{theorem}\label{thm71}
		Suppose $G=(V,E)$ is a simple graph with a 2-planar embedding of crossing number $k$. Let $n=|V|,m=|E|$. Then there is an $O(2^kn^{3/2}\log n)$-time algorithm finding the maximum weighted cut of $G$. 
	\end{theorem}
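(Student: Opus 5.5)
The plan is to combine Lemma \ref{lem71} with Corollary \ref{col62}. The one extra ingredient is a sparsity bound that turns $\max\{m,n\}+k$ into $O(n)$.

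First, apply Lemma \ref{lem71} to the given 2-planar embedding of crossing number $k$. This yields, in time $O(km)$, an edge set $E_1$ with $|E_1|\leq k$ such that the underlying simple graph $G_p$ of $G_{aug}/E_1$ is planar, where $G_{aug}=(V,E\cup E_1)$. By Lemma \ref{lem62}, $G_{aug}/E_1$ (which keeps multiple edges and self-loops) is then also planar. Hence $G_{aug}$ is $|E_1|$-contraction apex, and we may take $P_{|E_1|}(G_{aug})=E_1$.

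One check is needed here. In Lemma \ref{lem71} the contractions are interleaved with merging parallel edges and deleting loops, and edges are added between vertices that may already have been merged. I will argue that contracting the recorded edges directly in $G_{aug}$ gives the same graph up to parallel edges and loops. Merging and deleting loops commute with later contractions, and each added edge is between vertices that are identified in the final graph. Lemma \ref{lem62} then transfers planarity. Edges of $E_1$ already in $E$ cause no trouble: Corollary \ref{col62} adds only the new ones, with weight zero. If $|E_1|<k$, we simply use the smaller parameter.

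Next, apply Corollary \ref{col62} with parameter $|E_1|\leq k$. This gives the maximum weighted cut in time
\[
O\bigl(2^k(\max\{m,n\}+k)^{3/2}\log(\max\{m,n\}+k)\bigr).
\]

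It remains to show that $m=O(n)$ and $k=O(n)$; this is the main obstacle. Since $G$ is simple and 2-planar, the known edge-density bound for $s$-planar graphs \cite{111} gives $m\leq 5n-10=O(n)$. Each edge is crossed at most twice and each crossing involves two edges, so $k\leq m=O(n)$.

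If I prefer not to cite \cite{111}, I would argue as follows. Delete one edge from each crossing pair, which removes at most $k$ edges and leaves a planar simple graph, so $m\leq 3n-6+k$. Combined with $k\leq m$, this alone does not bound $m$, so I would fall back on the standard density lemma for 2-planar graphs.

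With $m,k=O(n)$, the running time becomes $O(2^kn^{3/2}\log n)$. The preprocessing cost $O(km)=O(n^2)$ from Lemma \ref{lem71} is dominated, since for $k\geq 1$ we have $n^2\leq 2^k n^{3/2}$ whenever $2^k\geq\sqrt n$. When $2^k<\sqrt n$, I would instead bound $km$ more carefully: each round only redraws edges near the crossing, so a round costs $O(\text{local degree})$. Alternatively, I would note that $km\leq n\log n\cdot O(1)$ in that regime, because $k<\tfrac12\log n$ and $m=O(n)$. Either way, the total is $O(2^kn^{3/2}\log n)$.
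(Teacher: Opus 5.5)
Your proposal is correct and follows the paper's proof: cite the 2-planar density bound $m\le 5n-10$ from \cite{111} to get $m,k=O(n)$, then chain Lemma~\ref{lem71}, Lemma~\ref{lem62} and Corollary~\ref{col62}; your bound $k\le m$ is even slightly sharper than the paper's $k\le 2m$. Your case analysis showing the $O(km)$ preprocessing of Lemma~\ref{lem71} is dominated is valid but unnecessary, since $km=O(kn)\le O(2^k n)$ directly.
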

	\begin{proof}
		By \cite{111}, $m\leq 5n-10=O(n)$, therefore $k\leq 2m=O(n)$. Apply Lemma \ref{lem71}, Lemma \ref{lem62} and Corollary \ref{col62} we get our result.
	\end{proof} \par
	\begin{theorem}\label{thm72}
		Suppose $G=(V,E)$ is a simple graph with a planar embedding of crossing number $k$ included as input. Let $n=|V|,m=|E|$. Then there is an $O(2^k(n+k)^{3/2}\log (n+k))$-time algorithm finding the maximum weighted cut of $G$. 
	\end{theorem}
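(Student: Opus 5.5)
The plan is to reduce to Corollary \ref{col62}, in the same way Theorem \ref{thm71} uses Lemma \ref{lem71}. The difference here is that the given embedding need not be 2-planar. So instead of contracting at crossings directly, I would first planarize the drawing by subdividing edges at crossing points. A subdivision changes the cut structure, though, so I would rather add auxiliary zero-weight edges and contract them.

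Concretely, I would handle each crossing point $x$ of edges $e_1=(v_1,v_2)$ and $e_2=(v_3,v_4)$ as follows:
\begin{enumerate}[(1)]
\item Replace $e_1$ by a path $v_1 - a - v_2$ and $e_2$ by a path $v_3 - b - v_4$, placing the new vertices $a,b$ near $x$. This does not preserve the cut, since a path of length 2 has even parity. The fix is to subdivide each crossed edge segment into paths of odd length. If an edge has $r$ crossings, I replace it by a path with $2r+1$ edges, giving every inner edge weight $\infty$ or a large weight $W$ and the original edge weight $w(e)$ to one segment. Under a maximum cut with large $W$, all $W$-edges are cut, so the path is cut if and only if the original edge is.
\item Choose one new vertex from each crossed edge and let these sit at $x$. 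Add the edge $(a,b)$ with weight $0$; this is the set $E_1$, so $|E_1|=k$.
\item Check that contracting $E_1$ identifies $a$ with $b$ at the point $x$. This removes the crossing and yields a planar graph, so $G_{aug}$ is $k$-contraction apex with $P_k(G_{aug})=E_1$.
\end{enumerate}

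The main obstacle is the odd-length gadget with large weights, since it is not clear that Corollary \ref{col62} permits such weights. An alternative I would try first is to avoid $W$ entirely. Replace each edge crossed $r$ times by a path of length $r+1$ whose pieces run between consecutive crossings, and for each crossing glue the two pieces through one new vertex shared by the two edges. That is not exact for a cut either.

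So the cleaner route is this. Subdivide every crossed edge into $2r+1$ pieces with pieces of weight $W=1+\sum_e w(e)$, and argue that a maximum cut of the new graph cuts every $W$-edge. This holds because the subdivided graph minus the original-weight pieces is a forest of paths, which is bipartite, so such a cut exists and any optimum must attain it. Restricted to the original vertices, such a cut then corresponds to an optimal cut of $G$, with weight offset by a constant.

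The final step is counting. The new graph has $O(n+k)$ vertices, because the number of new vertices is $O(k)$, and $O(m+k)$ edges. The embedding is planar after contracting the $k$ zero edges. Since $m$ may exceed $O(n)$, I would bound $m=O(n+k)$ using the crossing-lemma-type fact that a graph with crossing number $k$ has at most $3n-6+k$ edges: deleting one edge per crossing leaves a planar graph. With that bound, Corollary \ref{col62} gives $O(2^k(n+k)^{3/2}\log(n+k))$.
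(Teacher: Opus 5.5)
Your argument is correct, but it reaches Corollary~\ref{col62} by a different route than the paper.

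\textbf{The paper's route.} It subdivides only the edges crossed at least three times, inserting $2\lfloor (l-1)/2\rfloor$ vertices and giving every piece weight $w(e)$. This makes the drawing 2-planar with at most $n+2k$ vertices. It then invokes Theorem~\ref{thm71}, which relies on the round-by-round redrawing and contraction procedure of Lemma~\ref{lem71} and on the bound $m\le 5n-10$ for 2-planar graphs.

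\textbf{Your route.} You subdivide every crossed edge so that each crossing gets a private vertex on each of its two edges, and join these two vertices by a zero-weight edge. Contracting these $k$ edges, which form a matching, gives exactly the planarization of the drawing, so Corollary~\ref{col62} applies directly. You control $m$ with the elementary bound $m\le 3n-6+k$.

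\textbf{Comparison.} Your version is more self-contained. It bypasses the case analysis of Lemma~\ref{lem71}, including its subtleties about adding and contracting edges between already-merged vertices and merging parallel edges. It also avoids the density result for 2-planar graphs. The cost is $4k$ new vertices instead of $2k$, which does not change the bound.

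\textbf{On the weights.} Your stated ``main obstacle'' is not one: Corollary~\ref{col62} allows arbitrary non-negative real weights. The large-$W$ gadget is also unnecessary. As in the paper, giving every piece of an odd path the weight $w(e)$ already works. Internal vertices of different paths can be optimized independently, and each path contributes a constant plus $w(e)$ exactly when its endpoints are separated. For the same reason, $W\ge w(e)$ would already suffice in your version.

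If you keep the global argument with $W=1+\sum_e w(e)$, state the structural fact precisely. The $W$-subgraph is a forest in which every component contains exactly one original vertex; in general these components are spiders, not paths. That fact, not bipartiteness alone, guarantees that every partition of $V$ lifts to a cut containing all $W$-edges. You need this lifting for the correspondence between optimal cuts of the subdivided graph and optimal cuts of $G$.
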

	\begin{proof}
		For any edge $e=(v_1,v_2)\in E$ do the following: let a point move from $v_1$ to $v_2$ through edge $e$, and the intersection points it encounters be $u_1,\cdots,u_l$ in chronological order. If $l<3$, we do nothing. Otherwise, for each $i\in\{1,2,\cdots,\lfloor\frac{l-1}{2}\rfloor\}$, add two vertices $z_{2i-1},z_{2i}$ between $u_{2i},u_{2i+1}$ on $e$. Weight every edge on the path $v_1z_1z_2\cdots z_{2\lfloor\frac{l-1}{2}\rfloor}v_2$ by weight of the edge $e$, which we denote by $w(e)$. Let the resulting graph be $G_2$. Then $G_2$ is a 2-planar graph with at most $n+2k$ vertices. Moreover, the maximum weighted cut problem of $G_2$ is equivalent to that of $G$ in the sense described as following: \par
		Consider the maximum weighted cut problem of $G_2$. If $v_1$ and $v_2$ belong to the same set of the partition, then at most $2\lfloor\frac{l-1}{2}\rfloor$ edges on the path $v_1z_1z_2\cdots z_{2\lfloor\frac{l-1}{2}\rfloor}v_2$ can be in the cut. Otherwise, there is a partition such that all $(2\lfloor\frac{l-1}{2}\rfloor+1)$ edges on the path $v_1z_1z_2\cdots z_{2\lfloor\frac{l-1}{2}\rfloor}v_2$ belong to the cut. Therefore, the effect of the path $v_1z_1z_2\cdots z_{2\lfloor\frac{l-1}{2}\rfloor}v_2$ is the same as a single edge $(v_1,v_2)$ weighted by $w(e)$. As a conclusion, the maximum weighted cut problem of $G_2$ is equivalent to that of $G$, and we can find the maximum weighted cut of $G$ in time $O(n)$ if we know the maximum weighted cut of $G_2$. \par
		By Theorem \ref{thm71}, the maximum weighted cut problem of $G_2$ can be solved in time $O(2^k(n+k)^{3/2}\log (n+k))$. Therefore the theorem is proven.
	\end{proof}

	\section{Conclusion and discussion}
	In this work, we generalize Hadlock's conversion of maxcut on planar graphs to one on general graphs with non-negative edge weights. We first convert the weighted maxcut problem to minimum weighted odd-circuit cover problem by means similar to Hadlock's. Then, by introducing the concept of Eulerian-spanning set and coboundary operator, we convert the problem of finding minimum weighted odd-circuit cover to finding a minimum weighted edge set with specific coboundary. By enumerating edges that appear at least three times in the Eulerian-spanning set, and solving the rest of the problem using Lemma \ref{lem51}, we finally solve the problem and recover the answer to maxcut. \par
	We obtain an algorithm finding the maximum cut of $k$-contraction apex graphs parameterized by $k$. We modified our algorithm to find a maximum cut on a graph with crossing number $k$. \par
	However, after we proved our result on $k$-contraction apex graphs, we realized that there is a direct algorithm finding the weighted maxcut within the same time limit. Specifically, we can enumerate the all possible partition within each connected component in $V,P_k(G)$ in time $O(2^k)$. After we fixed a partition within each connected component, the rest of the problem can be solved using the maxcut algorithm on planar graphs in time $O(\max\{m,n\}^{3/2}\log\max\{m,n\})$. Therefore the total running time is $O(2^k\max\{m,n\}^{3/2}\log\max\{m,n\})$. We haven't found results describing the above algorithm for $k$-contraction apex graphs yet. \par
	There are several possible directions of future work. First, a shortcoming of our result is that the set $P_k(G)$ is either included as input, or found efficiently in special cases, such as the graph being 2-planar. In general, we can only apply a brute force search to find the $P_k(G)$ if it is not revealed to us, which induces a heavy $\Omega(n^{k+1})$ running-time overhead. In future work, one may try to develop a more efficient algorithm finding the edge set $P_k(G)$ when it is not included in the input. Second, our conversion only covers the case of non-negative weights. One may try to generalize it to the setting of real weights. Third, one may try to directly generalize the planar duality to the setting of general graphs.

		\section*{Acknowledgements}
	This work was funded by the National Key R \& D Program of China (No.~2022YFA1005102) and
the National Natural Science Foundation of China (Nos.~12325112, 12288101).

	\appendix
	
	\section{NP-completeness of WSSD}\label{a1}
	Unfortunately, solving the decision version of WSSD, i.e. Problem \ref{prb32}, turns out to be NP-complete, even for unweighted version, and even if we restrict the size of any $S_i$ to be either 2 or 3. We will prove the above results in this section. \par
	Throughout this section, we will use $S,\mc{S},T,w,r$ to denote an instance of Problem \ref{prb32}. If the weight function is omitted, we mean that the weight of any subset is 1. \par
	First, we notice that Problem \ref{prb32} is in NP since any $\mc{T}\subseteq\mc{S}$ satisfying $\triangle_{S_i\in \mc{T}}S_i=T$, and $|\mc{T}|\leq r$ is a proof. Therefore, we only need to prove the NP-hardness in the following theorems.
	\begin{theorem}\label{thm41}
		Problem \ref{prb32} under the restriction that $w\equiv 1$ is NP-complete.
	\end{theorem}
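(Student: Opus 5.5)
Membership in NP was already noted above, so the plan is to prove NP-hardness of Problem \ref{prb32} with $w\equiv 1$ by a polynomial reduction from a known NP-complete problem. The natural source is \emph{minimum distance / maximum-likelihood decoding of binary linear codes} (Berlekamp, McEliece and van Tilborg). In the syndrome decoding form it reads: given a matrix $H\in\Z_2^{p\times q}$, a syndrome $s\in\Z_2^{p}$ and an integer $r$, decide whether there is $x\in\Z_2^q$ with $Hx=s$ and Hamming weight $|x|\le r$. This is literally the unweighted WSSD. If 3-dimensional matching or exact cover by 3-sets is preferred as a self-contained source, it works equally well, and it also prepares the later restriction to sets of size $2$ or $3$.

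The reduction via syndrome decoding goes as follows. Let $S=[p]$ be the set of rows. For each column $j\in[q]$, include in $\mc{S}$ the set $S_j=\{i\in[p]:H_{ij}=1\}$. Take $T=\{i: s_i=1\}$ and keep the same threshold $r$. For $\mc{T}\subseteq\mc{S}$ with indicator vector $x$ on the columns, we have $\triangle_{S_j\in\mc{T}}S_j=T$ if and only if $Hx=s$ over $\Z_2$, and $|\mc{T}|=|x|$. So the two instances have the same answer, and the construction is clearly polynomial (size $O(pq)$).

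To avoid relying on the decoding result, I would instead reduce from exact cover by 3-sets (X3C). An instance consists of a universe $U$ with $|U|=3q$ and a family of 3-subsets $\mc{F}$. Set $S=U$, $\mc{S}=\mc{F}$, $T=U$ and $r=q$. Each set has size 3, so a subfamily of at most $q$ sets has symmetric difference of size at most $3q$. Equality with $|T|=3q$ forces the chosen sets to be pairwise disjoint and to cover $U$, which is exactly an exact cover. The converse direction is immediate.

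The only step needing care is this counting argument. We have $|\triangle \mc{T}|\le\sum|S_i|\le 3|\mc{T}|\le 3q$, and equality holds throughout only when $|\mc{T}|=q$ and no element lies in two chosen sets. I expect no real obstacle here; the main subtlety is just verifying that the symmetric difference can equal $U$ only via disjointness, which the size bound settles.
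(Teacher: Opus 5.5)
Your proposal is correct, but it takes a different route from the paper.

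\textbf{The paper's route.} The paper reduces from maximum cut on Hamiltonian graphs with the Hamiltonian cycle $C_H$ given. It builds the Eulerian-spanning set consisting of $C_H$ together with the fundamental cycles $C_e\subseteq C_H\cup e$. It then invokes the conversion of Theorems \ref{thm33} and \ref{thm34}, so that a cut of size at least $r$ corresponds to a WSSD solution of size at most $|E|-r$.

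\textbf{Your route.} Your reductions start from textbook NP-complete problems instead.
\begin{itemize}
\item The syndrome-decoding observation shows that unweighted WSSD is literally the coset-weights problem of Berlekamp, McEliece and van Tilborg.
\item The X3C reduction is fully self-contained. The chain $|T|=3q\le|\triangle_{S_i\in\mc{T}}S_i|\le\sum_{S_i\in\mc{T}}|S_i|=3|\mc{T}|\le 3q$ forces $|\mc{T}|=q$ and pairwise disjointness, exactly as you say.
\end{itemize}

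\textbf{What each route buys.} The paper's reduction exhibits hardness on instances produced by its own maxcut-to-WSSD pipeline. This shows that, without a structural restriction such as the $k$-frequent appearance condition, the conversion itself offers no shortcut. It has two costs:
\begin{itemize}
\item It rests on the NP-completeness of maxcut on Hamiltonian graphs with the cycle given, which the paper asserts without a reference.
\item Its instances contain sets of size $1$ (namely $\delta_1(e)=\{C_e\}$ for chords $e$) and sets of large size. This is why the paper needs the three-step gadget construction for the $\{2,3\}$ restriction in the next appendix theorem.
\end{itemize}
Your X3C instance already has $|S_i|=3$ for every set. It therefore gives that next theorem immediately, and in a stronger form (uniform size $3$). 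That form is sharp against the polynomial-time case $|S_i|\le 2$ of Theorem \ref{thm51}.
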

	\begin{proof}
		To see that the problem is NP-hard, we reduce the decision version of finding a maximum cut in a Hamiltonian graph (where the Hamiltonian cycle can be included in the input) to the problem in the theorem, where the former has been proven to be NP-complete. \par
		Suppose $G=(V,E)$ is a simple graph with Hamiltonian cycle $C_H$. We construct a Eulerian-spanning set $\mc{C}$ of $G$ as follows: \par
		First we include $C_H$ into $\mc{C}$. Then, for every edge $e\notin C_H$, take a cycle $C_e\subseteq C_H\cup e$ such that $e\in C_e$ and include it into $\mc{C}$. We need to prove $\mc{C}$ is indeed a Eulerian-spanning set.\par
		If the sum of elements in $\mc{C}_1\subseteq\mc{C}$ is the empty set, then none of $C_e$ can be in $\mc{C}_1$ since $e$ is an edge that appears only in $C_e$. Therefore only $C_H$ can be in $\mc{C}_1$, which indicates that it cannot be in $\mc{C}_1$. Then $\mc{C}_1=\emptyset$. \par
		Moreover, for any Eulerian subgraph $G_s=(V,E_s)$, let $E_s\rs C_H={e_1,e_2,\cdots,e_t}$. Then $E_r=E_s+\sum_{i=1}^{t}C_{e_i}$ is another Eulerian subgraph that uses only edges in $C_H$. Therefore either $E_r=0$, or $E_r=C_H$, both cases indicating that $E_s$ is in the span of $\mc{C}$. \par
		Now that we have a Eulerian-spanning set $C_H$, we can convert it to an instance of Problem \ref{prb31} as described in the proof of Theorem \ref{thm34}. Then $G$ has a cut of size at least $r$ if and only if there is an odd-circuit cover of size at most $|E|-r$. which is equivalent to the existence of $\mc{T}\subseteq\mc{S}$ such that $\triangle_{S_i\in\mc{T}}S_i=T$, and $|\mc{T}|\leq |E|-r$. Moreover, the reduction can be performed in polynomial time with respect to $|V|$. The proof ends.
	\end{proof}
	\begin{theorem}
		Problem \ref{prb32} under the restriction that $w\equiv1$, and $\forall S_i\in\mc{S},|S_i|\in\{2,3\}$ is NP-complete. 
	\end{theorem}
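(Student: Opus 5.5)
We reduce from the unweighted problem shown NP-complete in Theorem \ref{thm41}. The idea is to replace every set of an instance $(S,\mc{S},T,r)$ by a gadget of sets of size $2$ or $3$ that must be taken \emph{all or nothing}, with every gadget containing the same number $L$ of sets, so that costs scale uniformly by $L$. Membership in NP is already noted at the start of this appendix, so only NP-hardness needs proof.

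\textbf{Construction.} Discard empty sets in $\mc{S}$. Let $L=\max_i|S_i|+2$. For each $S_i=\{a_1,\dots,a_s\}$ we introduce fresh auxiliary elements, private to $S_i$ and not in $T$, and build $L$ sets. Every auxiliary element lies in exactly two gadget sets, and every $a_j$ lies in exactly one gadget set.
\begin{enumerate}[(a)]
\item For $s\ge 2$, start with the chain $\{a_1,a_2,y_1\},\{y_1,a_3,y_2\},\dots,\{y_{s-3},a_{s-1},a_s\}$ of $s-2$ sets; when $s=2$ this is just $\{a_1,a_2\}$.
\item For $s=1$, start with $\{a_1,y_1,y_2\},\{y_1,y_3\},\{y_2,y_3\}$.
\item Pad each gadget up to exactly $L$ sets without changing any of these incidence properties. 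To do so, replace some auxiliary $y$ shared by sets $A$ and $B$ by a path: put $A'=A\rs\{y\}\cup\{z_0\}$, $B'=B\rs\{y\}\cup\{z_t\}$, and insert $\{z_0,z_1\},\dots,\{z_{t-1},z_t\}$. For $s=2$, which has no auxiliary element, first rewrite $\{a_1,a_2\}$ as $\{a_1,z\},\{z,a_2\}$.
\end{enumerate}
Set $S'=S\cup\{\text{all auxiliaries}\}$, $T'=T$, $r'=rL$. This is clearly polynomial.

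\textbf{Key step: all or nothing.} Take any feasible $\mc{T}'$. Each auxiliary element is outside $T'$ and lies in exactly two sets, both in its own gadget. So $\mc{T}'$ contains either both or neither of these two sets. The gadget's sets are connected via shared auxiliaries: a chain, a triangle, or a subdivided chain. Hence $\mc{T}'$ contains a whole gadget or none of it.

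\textbf{Equivalence.} Take the union of the auxiliary-free parts of the sets in a whole gadget. The auxiliaries cancel, and the original elements $a_j$ each appear once, so the symmetric difference over the gadget is exactly $S_i$. Thus feasible $\mc{T}'$ correspond bijectively to feasible $\mc{T}\subseteq\mc{S}$, with $|\mc{T}'|=L|\mc{T}|$. Therefore $|\mc{T}|\le r$ if and only if $|\mc{T}'|\le rL$.

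\textbf{Main obstacle.} The step needing most care is the bookkeeping of the gadgets. Every set must have size exactly $2$ or $3$, every auxiliary must appear in exactly two sets, and the count must be exactly $L$. The $s=1$ case is the delicate one: parity forces a $3$-set there, since $1+2k$ incidences must be split into sets of sizes $2$ and $3$. This is why the triangle gadget is used.
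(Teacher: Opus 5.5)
Your proposal is correct and is essentially the paper's proof: both replace each set by a connected gadget of exactly $L$ (the paper takes $p=|S|+2$) unit sets of size $2$ or $3$, linked by private auxiliaries that each lie in exactly two sets, which forces all-or-nothing selection and lets the threshold scale to $rL$; the paper differs only in that it removes singletons beforehand by splitting $s$ into $s^{(1)},s^{(2)}$, passes through an intermediate uniform weighting, and closes each chain with the cap $\{z_{p-2},z_{p-1},z_p\},\{z_{p-1},z_p\}$, which is the same parity trick as your triangle gadget. One bookkeeping slip: for $s=3$ your base gadget $\{a_1,a_2,a_3\}$ also has no auxiliary, so padding step (c) cannot start there, and you should first rewrite it as $\{a_1,a_2,z\},\{z,a_3\}$, exactly as you did for $s=2$.
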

	\begin{proof}
		To see that the problem is NP-hard, we reduce the problem in Theorem \ref{thm41} to this problem. We only need to take care of sets of size 1 or at least 4. We break the reduction into three steps: \par
		Step 1. In step 1, we eliminate all subsets of size 1. Suppose $S,\mc{S},T,r$ is an instance of the problem, and there is a set $S_i\in\mc{S}$ of size 1, and $S_i=\{s\}$, we do some modifications to the problem. Let
		$$S'=S\rs\{s\}\cup\{s^{(1)},s^{(2)}\}$$ 
		$$\mc{S}'=\{S_1',\cdots,S_m\},\ 
		S_k'=\left\{
		\begin{aligned}
			&S_k,\ if\ s\notin S_k \\
			&S_k\rs\{s\}\cup\{s^{(1)},s^{(2)}\}\ if\ s\in S_k
		\end{aligned}
		\right.
		$$
		$$T'=\left\{
		\begin{aligned}
			&T,\ if\ s\notin T \\
			&T\rs\{s\}\cup\{s^{(1)},s^{(2)}\}\ if\ s\in T
		\end{aligned}
		\right.
		$$ \par
		Then $S',\mc{S}',T',r$ is another instance of the problem in Theorem \ref{thm41}, which is equivalent to $S,\mc{S},T,r$. Moreover, the number of sets of size 1 in $\mc{S}'$ is strictly smaller than that of $\mc{S}$. Therefore, by repeating the above procedure, we will finally get an instance $S^{(1)},\mc{S}^{(1)},T^{(1)},r$ such that all sets in $\mc{S}^{(1)}$ are of size at least 2. \par
		Step 2. We give every set in $\mc{S}^{(1)}$ a weight of $p=|S^{(1)}|+2$, and denote the weight function by $w$. Now we have an instance, i.e. $S^{(1)},\mc{S}^{(1)},T^{(1)},w,pr$, which is equivalent to $S^{(1)},\mc{S}^{(1)},T^{(1)},r$. By a slight abuse of notation, we will denote this instance by $S,\mc{S},T,w,pr$, where $w\equiv|S|+2,p=|S|+2$.\par 
		Step 3. Decompose all subsets of size at least 4. We actually decompose every set in $\mc{S}$ in this step. Suppose $S_i\in\mc{S},S_i=\{s_i\}_{i=1}^h,w(S_i)=p$. Because $h\leq p-2$, we can do the following modifications: let
		$$S'=S\cup\{z_1,z_2,\cdots,z_{p-1},z_p\}$$
		$$\mc{S}'=\mc{S}\rs\{S_i\}\cup\mc{A}$$
		$$\mc{A}=\mc{A}_1\cup\mc{A}_2\cup\mc{A}_3$$
		$$\mc{A}_1=\{\{s_1,z_1\},\{s_2,z_1,z_2\},\{s_3,z_2,z_3\},\cdots,\{s_h,z_{h-1},z_h\}\}$$
		$$\mc{A}_2=\{\{z_h,z_{h+1}\},\cdots,\{z_{p-3},z_{p-2}\}\},\quad (\text{if }h=p-2,\mc{A}_2=\emptyset)$$
		$$\mc{A}_3=\{\{z_{p-2},z_{p-1},z_p\},\{z_{p-1},z_p\}\}$$
		$$\forall j\neq i, w'(S_j)=w'(S_i)$$
		$$\forall A\in\mc{A},w(A)=1$$
		Then $S',\mc{S}',T,w',pr$ is another instance equivalent to $S,\mc{S},T,w,pr$ . By repeating the above procedure, for every $S_i\in\mc{S},S_i=\{s_i\}_{i=1}^h,w(S_i)=p$, it is decomposed to sets of size 2 or 3, and of weight 1. Therefore we finally get an instance $S^{(2)},\mc{S}^{(2)},T,pr$ such that every set in $\mc{S}^{(2)}$ is of size 2 or 3. The reduction is complete, and can be performed in time polynomial to the size of $|S|$ before the reduction.
	\end{proof}
	
	\section{Completing the proof of Theorem \ref{thm11}}\label{a2}
	Now we are given a simple connected planar graph $G_d=(V_d,E_d)$ with weighted edges, a vertex set $V_t\subseteq V_d$ such that $2\mid|V_t|$. Let $\partial_1''$ be the operator defined on the graph $G_d$. Our goal is to find the minimum weighted edge set $E_d'\subseteq E_d$ such that $\partial_1''(E_d)=V_t$. \par
	We first convert the problem to a maximum weighted Eulerian subgraph problem. Let $V_o\subseteq V_d$ be the set of vertices of odd degree in $G_d$, and $V_c=V_o\triangle V_t$. Since $G_d$ is connected, we can compute an edge set $E_c\subseteq E_d$ such that $\partial_1''(E_c)=V_c$ as follows: take any pairing of vertices in $V_c$, and compute a path connecting each pair. The symmetric difference between these paths is the $E_c$ we want. Let $W$ be the total weight of $E_d$. For every $e\in E_c$, we add a duplicate of $e$ to the graph $G_d$, and weight the duplicated edge by $W+1$. Denote the graph obtained by $G_{daug}=(V_d,E_{aug})$, then it suffices to find $E_d'\subseteq E_{aug}$ with minimum total weight such that $\partial_1(E_d')=V_t$. Moreover, $V_t$ is the set of odd vertices in $G_{daug}$. Therefore, $\forall E_d'\subseteq E_{aug},\partial_1(E_d')=V_t$, we have that $(V_d,E_{aug}\rs E_d')$ is a Eulerian subgraph, and vice versa. Therefore, it suffices to find a maximum weighted Eulerian subgraph of $G_{daug}$. We note that there are at most two edges between any pair of vertices in $G_{daug}$, and $|E_{aug}|\leq 2|E_d|$ because $E_c\subseteq E_d$. Moreover, $G_{daug}$ is still a planar graph by Lemma \ref{lem62}. \par
	Now we convert the maximum weighted Eulerian subgraph problem to a maximum weighted perfect matching problem, using a procedure almost identical to that in \cite{113}. Starting with the graph $G_{daug}$, the conversion consists of two steps:\par
	Step 1. For every vertex $v\in V_d$ of degree at least five in $G_{daug}$, do the following: let the edges incident to $v$ be $e_1,e_2,\cdots,e_r$, in cyclic order. Split $v$ into $q=\lfloor \frac{r-1}{2}\rfloor$ vertices, $u_1,\cdots,u_q$. Add a path $u_1u_2\cdots u_q$, and weight all edges on the path by zero. Attach $e_1,e_2,e_3$ to $u_1$. Attach $e_{2i},e_{2i+1}$ to $u_i$, for $i=2,3,\cdots,q-1$. Attach $e_{2q},e_{2q+1},\cdots,e_r$ to $u_q$. This step of modification is depicted in Figure \ref{f4}. \par
	\begin{figure}[h]
		\centering 
		\includegraphics[scale=1.]{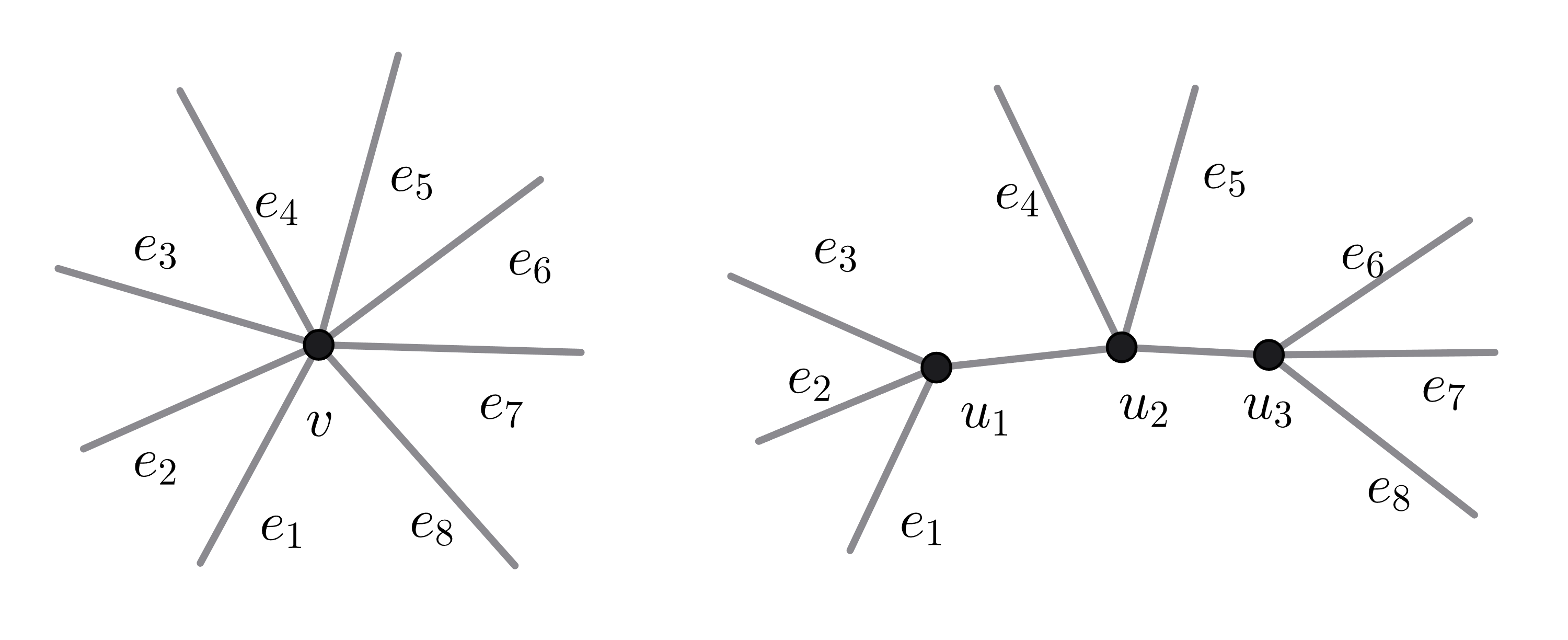}
		\caption{An example of modification in step 1. The vertex $v$ is of degree 8, and is split into three vertices.}
		\label{f4}
	\end{figure}
	Let the graph obtained after step 1 be $G_{d}^{(1)}=(V_d^{(1)},E_d^{(1)})$, which is still planar as is shown in Figure \ref{f4}. There are at most two edges between any pair of vertices in $G_d^{(1)}$ because $G_{daug}$ has the same property. Moreover, every vertex in $G_d^{(1)}$ is of degree at most 4. Every Eulerian subgraph of $G_d^{(1)}$ corresponds to a Eulerian subgraph of $G$ with the same total weight, and vice versa. Therefore we only need to find a maximum weighted Eulerian subgraph in $G_d^{(1)}$. We note that $|V_d^{(1)}|\leq |V_d|+|E_d|\leq\max\{m,n\}+m=O(\max\{m,n\})$,$|E_d^{(1)}|\leq|E_d|+|V_d^{(1)}|=O(\max\{m,n\})$.\par
	Step 2. For every vertex $v$ in $G_d^{(1)}$, do one of the following according to the degree of $v$:\par
	(i) If $v$ is of degree one, let the edge incident to $v$ be $e_1$. Split $v$ into $K_2$, weight the edge in $K_2$ by zero, and attach $e_1$ to a vertex in $K_2$. \par
	(ii) If $v$ is of degree two, let the edges incident to $v$ be $e_1,e_2$. Split $v$ into $K_2$, weight the edge in $K_2$ by zero, and attach $e_1,e_2$ to different vertices in $K_2$. \par
	(iii) If $v$ is of degree three, let the edges incident to $v$ be $e_1,e_2,e_3$. Split $v$ into $K_4$, weight the edges in $K_4$ by zero, and attach $e_1,e_2,e_3$ to three different vertices in $K_4$. \par
	(iv) If $v$ is of degree four, let the edges incident to $v$ be $e_1,e_2,e_3,e_4$. Split $v$ into $K_4$, weight the edges in $K_4$ by zero, and attach $e_1,e_2,e_3,e_4$ to four different vertices in $K_4$. \par
	The modification in this step is depicted in Figure \ref{f5}. \par
	\begin{figure}[h]
		\centering 
		\includegraphics[scale=.5]{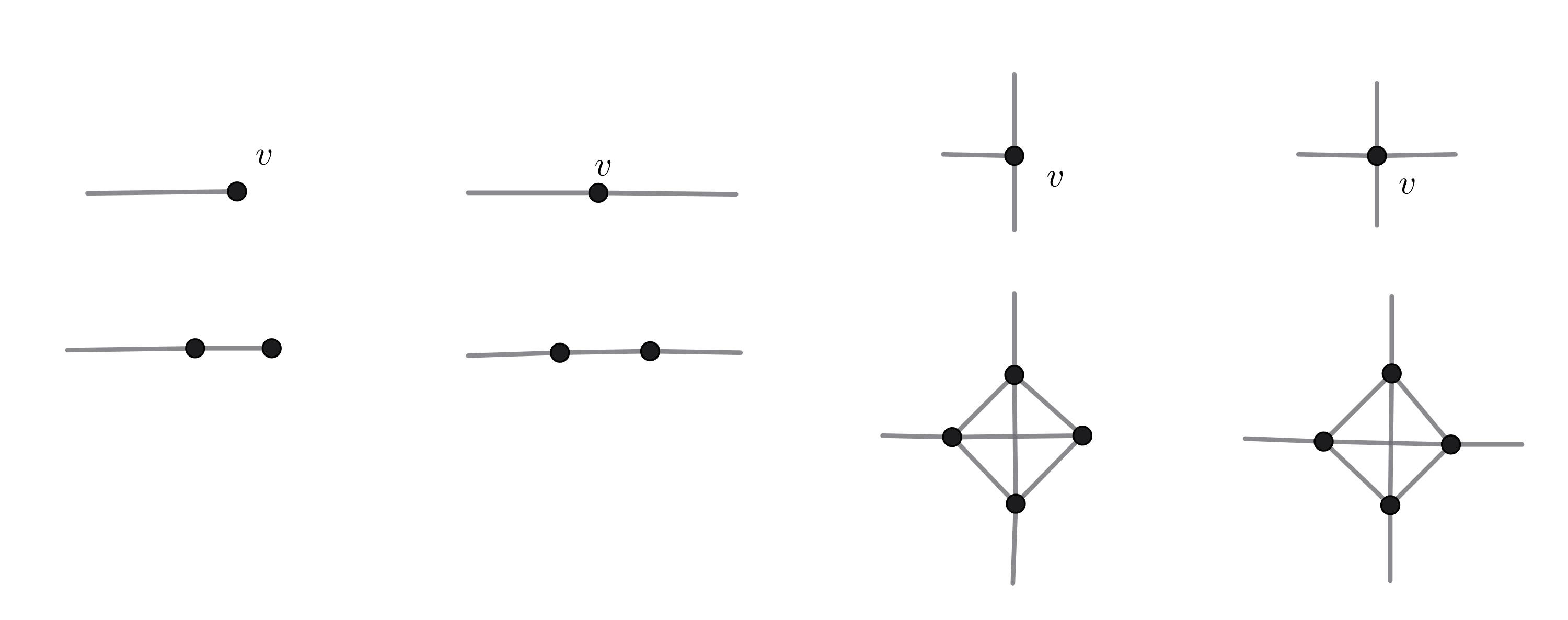}
		\caption{The modification in step 2. The figures, from left to right, correspond to the cases where $v$ is of degree 1,2,3,4, respectively.}
		\label{f5}
	\end{figure}
	Let $G_d^{(2)}=(V_d^{(2)},E_d^{(2)})$ be the graph obtained after this step. Then $G_d^{(2)}$ is simple because $G_d$ is assumed to be free of self-loops. It can be verified that, for every Eulerian subgraph of $G_d^{(1)}$, its corresponding edges in $G_d^{(2)}$ is a subset of a perfect matching with the same total weight. And for every perfect matching in $G_d^{(2)}$, the edges in $G_d^{(1)}$ corresponding to matching edges in $G_d^{(2)}$ form a Eulerian subgraph with the same total weight. The verifications are similar to those in \cite{113}, which already cover the cases where $v$ is of degree 3 and 4. Therefore we omit the verifications here. We note that $|V_d^{(2)}|\leq 4|V_d^{(1)}|=O(\max\{m,n\})$,$|E_d^{(2)}|\leq|E_d^{(1)}|+6|V_d^{(2)}|=O(\max\{m,n\})$. \par
	Now we only need to find a maximum weighted perfect matching in $G_d^{(2)}$. First, by adding a suitable large number to all weights, we can convert the problem to finding a maximum weighted matching (note that there exists a perfect matching in $G_d^{(2)}$, because empty graph is a Eulerian subgraph of $G_d^{(1)}$). Now we describe an algorithm similar to that in \cite{113,114}, using some results of this work:
	\begin{theorem}[\it{see Theorem 1 and Corollary 1 of \cite{114}}]\label{thm62}
		Let $G=(V,E)$ be any undirected simple planar graph of $n$ vertices. Let $n=|V|$. Then there is an $O(n)$-time algorithm finding a partition of $V$ into three sets $A,B,C$ satisfying the following: $|A|\leq2n/3,|B|\leq2n/3,|C|=O(\sqrt{n})$, and no edge of $E$ joins a vertex in $A$ and a vertex in $B$.
	\end{theorem}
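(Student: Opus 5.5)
The plan is to follow the classical Lipton--Tarjan argument, which combines breadth-first levels with fundamental cycles of a shallow spanning tree. It suffices to treat a connected graph. If $G$ is disconnected, we handle components one at a time: if some component has more than $2n/3$ vertices, we separate only that component and distribute the other components greedily between $A$ and $B$; otherwise a greedy packing of whole components into $A$ and $B$ already gives $C=\emptyset$.

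\textbf{Step 1 (levels and the two cut levels).} We fix an embedding, computable in $O(n)$ time by the planarity test cited earlier (\cite{109}), and run BFS from an arbitrary root. This gives levels $L_0,L_1,\dots,L_r$. Let $\ell_1$ be the level at which the cumulative count of vertices first reaches $n/2$. We pick $\ell_0\le \ell_1$ to be the closest level below with $|L_{\ell_0}|+2(\ell_1-\ell_0)\le 2\sqrt n$, and $\ell_2>\ell_1$ to be the closest level above with $|L_{\ell_2}|+2(\ell_2-\ell_1-1)\le 2\sqrt n$. Both exist by counting: if no such level existed, the levels in between would contain more than $n$ vertices. If the set of vertices strictly between $\ell_0$ and $\ell_2$ has at most $2n/3$ vertices, then $C=L_{\ell_0}\cup L_{\ell_2}$ already works. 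We obtain $A$ and $B$ by grouping the three resulting pieces, namely the part below $\ell_0$, the middle part, and the part above $\ell_2$; the part below $\ell_0$ and the part above $\ell_2$ each have at most $n/2$ vertices by the choice of $\ell_1$, so a suitable grouping keeps both sides at most $2n/3$.

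\textbf{Step 2 (the middle strip).} Otherwise we delete the levels $\ge \ell_2$ and contract all levels $\le \ell_0$ into a single new root. The result is a planar graph whose BFS tree has depth at most $\ell_2-\ell_0-1=O(\sqrt n)$. We then triangulate this graph in $O(n)$ time. The key lemma is the following: in a triangulated planar graph whose vertices carry nonnegative weights summing to $1$, and which has a spanning tree of radius $\rho$, there is a non-tree edge whose fundamental cycle (of length at most $2\rho+1$) has total weight at most $2/3$ strictly inside and at most $2/3$ strictly outside. We prove it by starting from any non-tree edge and, while the heavier side exceeds $2/3$, replacing the edge by one of the two other edges of the triangle lying on the heavy side. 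This shrinks the heavy side, and a linear-time implementation is possible by precomputing the cost inside each cycle via the dual tree. In the end $C$ consists of this cycle (with the contracted root replaced by nothing) together with $L_{\ell_0}\cup L_{\ell_2}$. This has size at most $2\sqrt n+2\sqrt n+O(\sqrt n)=O(\sqrt n)$, using the defining inequalities for $\ell_0$ and $\ell_2$.

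\textbf{Step 3 (assembling $A$ and $B$).} We combine the inside and outside of the cycle with the pieces below $\ell_0$ and above $\ell_2$. Each piece has size at most $2n/3$, and, as in Step 1, they can be grouped so that both $A$ and $B$ have size at most $2n/3$. No edge joins $A$ and $B$, since every edge lies within one level or between consecutive levels, and, by planarity, no edge crosses the cycle.

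The main obstacle I expect is the fundamental-cycle lemma in Step 2. Its statement is routine, but the $O(n)$ running time is the difficult part. That requires maintaining the interior weights and cycle lengths incrementally through the triangle-walking procedure rather than recomputing them at each step, and it requires treating carefully the degenerate cases in which the other triangle edges are tree edges.
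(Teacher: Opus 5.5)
The paper does not prove this statement; it imports it from \cite{114}, and your proposal reconstructs that classical Lipton--Tarjan argument (BFS levels $\ell_0\le\ell_1<\ell_2$, contraction of the low levels, and a fundamental-cycle separator of length $O(\sqrt n)$ in the triangulated middle strip), which is correct in outline. The one detail to make explicit is that in the triangle-walking step you move to the sub-cycle with the heavier interior; then the newly created opposite side always has weight at most $2/3$, so the heavy side stays nested, the walk is a monotone descent in the dual tree, and it terminates in linear total time.
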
 \par
	In the algorithm, we need to apply the above algorithm to the graph $G_d^{(1)}$, and its induced subgraphs. Although $G_d^{(1)}$ itself is not simple, there are at most two edges between any pair of vertices. Therefore the conclusion of Theorem \ref{thm62} still holds:
	\begin{corollary}\label{col61}
		Let $G=(V,E)$ be any undirected planar graph of $n$ vertices, such that there are at most two edges between any pair of vertices. Let $n=|V|$. Then there is an $O(n)$-time algorithm finding a partition of $V$ into three sets $A,B,C$ satisfying the following: $|A|\leq2n/3,|B|\leq2n/3,|C|=O(\sqrt{n})$, and no edge of $E$ joins a vertex in $A$ and a vertex in $B$.
	\end{corollary}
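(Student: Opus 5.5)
The plan is to reduce Corollary \ref{col61} to Theorem \ref{thm62} by simplifying the graph. Let $G_s=(V,E_s)$ be the underlying simple graph of $G$, obtained by deleting all self-loops and merging every group of parallel edges into a single edge. Lemma \ref{lem62} shows that $G_s$ is planar. It has the same vertex set, so it still has $n$ vertices.

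Next, apply the algorithm of Theorem \ref{thm62} to $G_s$. This produces a partition $V=A\cup B\cup C$ with $|A|\leq 2n/3$, $|B|\leq 2n/3$, $|C|=O(\sqrt n)$, and no edge of $E_s$ joins $A$ and $B$. The same partition works for $G$. Any edge of $E$ between a vertex of $A$ and a vertex of $B$ has distinct endpoints, so it is not a self-loop. It therefore has a representative edge in $E_s$ with the same endpoints, and that edge would join $A$ to $B$, which is a contradiction. The size bounds depend only on $V$, so they carry over unchanged.

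The remaining issue, and the only real one, is the running time. We must show that building $G_s$ takes $O(n)$ time. Because at most two edges join any pair of vertices, $|E|\leq 2|E_s|+|\{\text{self-loops}\}|$. We can drop self-loops entirely in a first pass. In the intended use the graphs have no self-loops anyway, since $G_d$ was made loop-free, and in general a loop-free version is all that matters for the separator.

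Then $|E|\leq 2|E_s|\leq 2(3n-6)=O(n)$ by Euler's formula for simple planar graphs. To merge parallel edges in linear time, we bucket the edges by their ordered endpoint pair $(\min,\max)$ using a two-pass radix sort over vertex indices. This takes $O(n+|E|)=O(n)$ time, after which each bucket is scanned and one edge is kept from it. Adding the $O(n)$ time of Theorem \ref{thm62} gives the claimed $O(n)$ total.

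If self-loops must be counted in the input size, the bound becomes $O(n+|E|)$. This is still linear in the input, and it is what the later application needs.
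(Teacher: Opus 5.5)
Your proposal is correct and is essentially the paper's argument: the paper justifies Corollary \ref{col61} in one line, saying that because at most two edges join any pair of vertices the conclusion of Theorem \ref{thm62} still holds. Your reduction to the underlying simple graph, with the partition transferring back to $G$, the count $|E|\le 2(3n-6)$, and the radix-sort merge, supplies exactly the details that line leaves implicit. The self-loop caveat is unnecessary if the multiplicity hypothesis is read as also covering the pair $(v,v)$, since then there are at most $2n$ loops and $|E|=O(n)$ outright; in any case the graph $G_d^{(1)}$ to which the paper applies the corollary is loop-free.
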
 \par
	\begin{theorem}[\it{see Lemma 2 in \cite{114} and the comments below it}]\label{thm63}
		Let $G=(V,E)$ be an undirected simple graph with edges weighted, and $v\in V$. Let $n=|V|,m=|E|$. Suppose we know the maximum weighted matching of $G-v$, then a maximum weighted matching of $G$ can be computed in time $O(m\log n)$.
	\end{theorem}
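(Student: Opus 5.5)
The plan is to reduce the update to a single augmentation step of a primal--dual weighted matching algorithm. Let $M$ be the given maximum weighted matching of $G-v$, and view it as a matching of $G$ in which $v$ is exposed.

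\emph{Structural step.} Let $M^*$ be any maximum weighted matching of $G$, and consider $M\triangle M^*$. Every vertex has degree at most $2$ in it, and $v$ has degree at most $1$ since $v$ is unmatched in $M$. So $M\triangle M^*$ splits into vertex-disjoint alternating cycles and paths, and at most one component, call it $P_v$, contains $v$; it is an alternating path with $v$ as an endpoint.

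For every other component $Q$, both $M\triangle Q$ and $M^*\triangle Q$ are matchings: the first avoids $v$, so it is a matching of $G-v$. Maximality of $M$ in $G-v$ gives $w(M^*\cap Q)\le w(M\cap Q)$, and maximality of $M^*$ in $G$ gives the reverse inequality, so the two are equal. Flipping $M^*$ on all such $Q$ therefore yields another maximum matching of $G$, namely $M\triangle P_v$. Hence a maximum weighted matching of $G$ is $M\triangle P$, where $P$ is an $M$-alternating path starting at $v$ of maximum gain $w(P\setminus M)-w(P\cap M)$; the empty path, with gain $0$, is allowed. It then suffices to find such a path.

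\emph{Algorithmic step.} A max-gain alternating path is not a plain shortest-path problem, because odd cycles force blossoms. I would therefore use the LP-duality framework of Edmonds' weighted matching algorithm. I assume, as is standard for the algorithm producing $M$ and as in the source \cite{114}, that an optimal dual solution for $G-v$ is available: vertex duals $y_u$ and blossom duals $z_B$ certifying the optimality of $M$. Extend it to $G$ by setting
$$y_v=\max\Bigl\{0,\ \max_{(v,u)\in E}\bigl(w(v,u)-y_u\bigr)\Bigr\},$$
which keeps every edge dual-feasible and leaves all other complementary slackness conditions intact.

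The only remaining violation is that $v$ may be exposed with $y_v>0$. This is exactly the situation handled by one search phase of the primal--dual method rooted at the single exposed vertex $v$: grow an alternating tree from $v$, shrink blossoms as they form, and expand blossoms whose $z_B$ reaches $0$. Dual adjustments are chosen by the minimum slack among the four usual quantities. The phase stops either when it augments along a path to a vertex with zero dual, or when $y_v$ drops to $0$. In both cases complementary slackness holds at the end, so the resulting matching is optimal for $G$.

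\emph{Complexity and main obstacle.} The main obstacle is implementing this single phase, including blossom shrinking and expansion and the dual updates, in $O(m\log n)$ time rather than the naive $O(nm)$. My plan is to follow Gabow's implementation of one Edmonds search: keep priority queues (mergeable heaps) of edge slacks keyed by the outer and inner blossoms, maintain duals lazily through a global offset, and handle blossom merges by heap melding. Each edge is then inserted and extracted $O(1)$ times at cost $O(\log n)$, and each blossom event costs amortized $O(\log n)$, giving $O(m\log n)$ overall. If the inherited duals cannot be assumed, I would note that Theorem~\ref{thm63} is applied recursively in the separator-based algorithm of \cite{113,114}, where the duals can be carried along with the matchings. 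This is precisely the content of the comments after Lemma 2 in \cite{114}, which I would cite for the data-structure details.
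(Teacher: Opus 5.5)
This is essentially the argument behind the paper's citation of Lemma~2 of \cite{114}: a single primal--dual search phase rooted at $v$, started from optimal duals for $G-v$ extended by your choice of $y_v$, and implemented with Gabow-style heaps in $O(m\log n)$ time. Two remarks on your version: first, state explicitly that the step size must also respect the duals of non-root outer vertices, flipping the even alternating path from $v$ to such a vertex when its dual reaches $0$, since otherwise those duals can go negative. Second, you are right that the optimal duals and nested blossom structure of $G-v$ must be supplied along with the matching, which the statement as written omits (without them, even the bipartite case is a shortest-path problem with negative arc costs), so the recursion in Appendix~\ref{a2} must return duals as well; this is harmless, since the duals for $G[A_d]$ and $G[B_d]$ simply combine.
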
 \par
	Now we describe the algorithm, which begins with the simple graph $G_{d}^{(2)}$: \par
	Step 1. If $G=(V(G),E(G))$ contains at most one vertex, return the empty set as the maximum weighted matching. \par
	Step 2. Let the vertices of $G_d^{(1)}$ corresponding to vertices in $G$ form a set $V$. Apply the algorithm of Corollary \ref{col61} on the induced graph $G_d^{(1)}[V]$ to get a partition of $V$ into three sets $A,B,C$, where $|C|=O(\sqrt{|V|})$. Let $A_d,B_d,C_d\subseteq V(G)$ be the set of vertices corresponding to $A,B,C$. Then $|C_d|\leq4|C|=O(\sqrt{|V|})=O(\sqrt{|V(G)|})$. Apply our algorithm recursively to $G[A_d],G[B_d]$ to find the maximum weighted matching in $G[A_d],G[B_d]$. \par
	Step 3. Add vertices in $C_d$ to $G[A_d\cup B_d]$ one by one. Each time a vertex is added, apply Theorem \ref{thm63} to find a maximum weighted matching in the new graph. \par
	The description is completed. \par
	In Step 2, there are no edges of $G$ joining a vertex in $A_d$ and a vertex in $B_d$, therefore the union of maximum weighted matchings of $G[A_d],G[B_d]$ is indeed the maximum weighted matching of $G[A_d\cup B_d]$. Therefore the algorithm is correct. Let $t(n)$ be the running time of this algorithm on a graph with at most $n$ vertices. Then
	\begin{equation}
		\begin{aligned}
			& t(1)=c_1 \\
			\forall n>1,\ &t(n)\leq\max\{t(n_1)+t(n_2)+c_2n^{3/2}\log n:n_1+n_2=n,n_1,n_2\leq 2n/3\}
		\end{aligned}
	\end{equation} \par
	Where $c_1,c_2$ are positive constants. An inductive argument shows that $t(n)=O(n^{3/2}\log n)$. Therefore the time needed to compute a maximum weighted matching in $G_d^{(2)}$ is \\ $O(\max\{m,n\}^{3/2}\log(\max\{m,n\}))$. \par
	Moreover, converting the problem to a maximum weighted matching problem costs time at most $O(\max\{m,n\})$. Solving the maximum weighted matching problem costs time $O(\max\{m,n\}^{3/2}\log(\max\{m,n\}))$, and recovering the answer $E_p^*$ based on the solution also costs time at most $O(\max\{m,n\})$. Therefore the procedure in this appendix costs an overall $O(\max\{m,n\}^{3/2}\log(\max\{m,n\}))$ time.

\end{document}